\documentclass[10pt]{article}

\usepackage[
  margin=1in,
  top=0.7in,
  bottom=0.8in
]{geometry}

\usepackage{iftex}
\ifPDFTeX
  \usepackage[T1]{fontenc}
  \usepackage[utf8]{inputenc}
  \usepackage{textcomp}
  \usepackage{lmodern}
\else
  \usepackage{unicode-math}
  \defaultfontfeatures{Scale=MatchLowercase}
  \defaultfontfeatures[\rmfamily]{Ligatures=TeX,Scale=1}
\fi

\usepackage{amsmath,amssymb,amsthm}

\usepackage{enumitem}

\usepackage[dvipsnames]{xcolor}

\definecolor{estfill}{HTML}{EEF3F8}
\definecolor{uqfill}{HTML}{E6F4F1}
\definecolor{uqdark}{HTML}{1F6F68}

\usepackage{
  longtable,
  booktabs,
  array,
  tabularx,
  multirow,
  makecell,
  pifont
}
\usepackage{calc}
\usepackage{etoolbox}

\newcolumntype{P}[1]{>{\raggedright\arraybackslash}p{#1}}

\newcommand{\cmark}{\ding{51}}
\newcommand{\xmark}{\ding{55}}
\newcommand{\pmark}{\raisebox{0.2ex}{\small$\triangle$}}

\makeatletter
\patchcmd{\longtable}
  {\par}
  {\if@noskipsec\mbox{}\fi\par}
  {}{}
\makeatother

\IfFileExists{footnotehyper.sty}
  {\usepackage{footnotehyper}}
  {\usepackage{footnote}}
\makesavenoteenv{longtable}

\usepackage{graphicx}

\makeatletter
\def\maxwidth{%
  \ifdim\Gin@nat@width>\linewidth
    \linewidth
  \else
    \Gin@nat@width
  \fi
}
\def\maxheight{%
  \ifdim\Gin@nat@height>\textheight
    \textheight
  \else
    \Gin@nat@height
  \fi
}
\makeatother

\setkeys{Gin}{
  width=\maxwidth,
  height=\maxheight,
  keepaspectratio
}

\usepackage{caption}
\usepackage{subcaption}
\usepackage{float}

\makeatletter
\def\fps@figure{htbp}

\floatstyle{ruled}
\newfloat{codelisting}{htbp}{lop}
\floatname{codelisting}{Listing}

\makeatother

\newtheorem{theorem}{Theorem}
\newtheorem{assumptions}{Assumptions}

\newenvironment{identifiabilityassumptions}
  {%
    \begin{assumptions}[Assumptions for Identifiability]%
  }
  {%
    \end{assumptions}%
  }

\makeatletter
\let\oldparagraph\paragraph
\renewcommand{\paragraph}{%
  \@ifstar{\articleParagraphStar}{\articleParagraphNoStar}%
}
\newcommand{\articleParagraphStar}[1]{%
  \oldparagraph*{#1}\mbox{}%
}
\newcommand{\articleParagraphNoStar}[1]{%
  \oldparagraph{#1}\mbox{}%
}

\let\oldsubparagraph\subparagraph
\renewcommand{\subparagraph}{%
  \@ifstar{\articleSubparagraphStar}{\articleSubparagraphNoStar}%
}
\newcommand{\articleSubparagraphStar}[1]{%
  \oldsubparagraph*{#1}\mbox{}%
}
\newcommand{\articleSubparagraphNoStar}[1]{%
  \oldsubparagraph{#1}\mbox{}%
}
\makeatother

\usepackage{setspace}
\usepackage{natbib}

\IfFileExists{xurl.sty}{\usepackage{xurl}}{}
\usepackage{hyperref}
\usepackage{bookmark}

\hypersetup{
  pdftitle={Hypothesis Testing for Causal Discovery},
  pdfauthor={Shreya Prakash; Fan Xia; Elena A. Erosheva},
  pdfkeywords={
    statistical guarantees,
    statistical inference,
    uncertainty quantification,
    model diagnostics,
    assumption violations
  },
  colorlinks=true,
  linkcolor=blue,
  filecolor=Maroon,
  citecolor=Blue,
  urlcolor=Blue,
  pdfcreator={LaTeX via pandoc}
}

\ifLuaTeX
  \usepackage{selnolig}
\fi

\begin{document}

\title{Statistical Inference for Bivariate Functional Causal Discovery}

\author{
Shreya Prakash$^{1}$ \and
Fan Xia$^{4}$ \and
Elena A. Erosheva$^{1,2,3}$
}
\date{%
    $^1$Department of Statistics, University of Washington, Washington, USA\\%
    $^2$School of Social Work, University of Washington, Seattle, WA, USA\\
    $^3$Center for Statistics and the Social Sciences, University of Washington, Seattle, WA, USA\\
    $^4$Department of Epidemiology and Biostatistics, University of California San Francisco, California, USA\\
    [2ex]%
}
\maketitle

\abstract{Causal discovery methods aim to determine the causal direction between variables using observational data. Functional causal discovery methods rely on structural and distributional assumptions to determine directionality but typically lack statistical inference. This paper reviews the statistical guarantees of existing functional causal discovery methods in theory, software, and applied use, highlighting a key gap: the absence of a unified inferential framework that applies broadly across model classes. As a first step toward addressing this gap, we formalize a test-based approach for bivariate causal discovery by repurposing goodness-of-fit and independence tests within a hypothesis-testing framework. Because directionality is determined through two disjoint hypothesis tests, corresponding to four causal discovery outcomes, the approach provides explicit uncertainty quantification and diagnostic insight into assumption violations. Resampling is further used to estimate the rates of causal discovery outcomes, offering an additional layer of inference. We demonstrate the use and behavior of our inferential framework through simulations that vary the degree of assumption violation, as well as through real-data applications. We conclude with practical lessons and recommendations for advancing statistical guarantees in functional causal discovery.
}

% \jnlcitation{\cname{%
% \author{Prakash M.},
% \author{Xia F}, and
% \author{Erosheva E}}.
% \ctitle{A Diagnostic Tool for Functional Causal Discovery.} \cjournal{\it Statistics in Medicine} \cvol{2024;00(00):1--18}.}

\renewcommand\thefootnote{}
\footnotetext{Preprint.}

\renewcommand\thefootnote{\fnsymbol{footnote}}
\setcounter{footnote}{1}

\section{Introduction}

Identifying and understanding causal relations is fundamental to many scientific disciplines. For example, in biology, uncovering causal links in gene regulatory networks is crucial for gaining insight into biological processes and disease mechanisms \citep{jiang2023signet}. While randomized experiments and carefully designed interventions can reveal causal relationships, these are often too expensive, too slow, or simply infeasible. As a result, reliable causal discovery methods that can extract evidence of causal structure from observational data are essential.

Causal discovery methods have been applied across the sciences \citep{glymour_review_2019, zhang2012inferring, saito2023causal, kotoku2020causal, rosenstrom2012pairwise, hu2018application}. These methods typically represent causal structure using a Directed Acyclic Graph (DAG), whose nodes correspond to random variables and whose edges represent direct causal dependencies. The acyclicity assumption rules out directed cycles and ensures that the graph encodes a well-defined causal ordering. 

In this paper, we begin by providing the first focused review of statistical guarantees in functional causal discovery, a family of causal discovery methods that identify causal direction by positing specific functional relationships between variables. Existing surveys of the literature describe model classes and algorithms, and a growing but still limited body of work has begun to develop inferential tools for causal discovery~\citep[e.g.,][]{komatsu2009,thamvitayakul2012,wang2025confidence}. In this paper, we systematically review whether and how commonly used functional causal discovery methods provide uncertainty quantification, error control, and tools for diagnosing assumption violations. Our review shows that these inferential components---central to classical statistical methodology---receive limited attention in existing functional causal discovery approaches. To address these gaps, we then propose a test-based framework for bivariate functional causal discovery.

Before turning to our focused review of statistical guarantees, we briefly review existing causal discovery methods. Two broad classes of causal discovery are constraint-based and function-based methods~\citep{glymour_review_2019}. Constraint-based approaches---such as the Fast Causal Inference (FCI) algorithm \citep{spirtes2000constructing} and the Peter–Clark (PC) algorithm \citep{spirtes2000constructing}---search for patterns of conditional independence in the data. Their validity depends on the faithfulness assumption, which requires that all conditional independence relations present in the observed distribution are implied by the DAG. In general, constraint-based methods identify a Markov equivalence class rather than a unique DAG. Constraint-based methods work best with large sample sizes, where tests of conditional independence have sufficient power \citep{glymour_review_2019}, and they provide no information in purely bivariate settings where no conditional independences exist.

Functional causal discovery methods take a different approach: they use structural assumptions on the functional form of the relationship between variables to identify the causal direction. In the bivariate case, for two random variables $X$ and $Y$, suppose the true causal direction is $X$ causes $Y$. Then, functional causal models are of the form:
$$Y = f(X,\eta, \mathbf{\phi}),$$ 
where $\eta$ is noise that is independent of $X$, $f \in \mathcal{F}$ is the functional form, and $\phi$ is the parameter set. Given $\boldsymbol{\phi}$, some models additionally assume that $f$ is invertible so that $\eta$ can be recovered uniquely from $(X, Y)$. These methods identify the causal direction by exploiting asymmetries: when the model assumptions hold, the independence between the cause $X$ and the noise $\eta$ appears only in the true direction, not the reverse. 

Prominent examples include the Linear Non-Gaussian Acyclic Model (LiNGAM) \citep{shimizu_use_2008}, where $Y = X + \eta$ and at most one of $X$ or $\eta$ is Gaussian; the Additive Noise Model (ANM) \citep{hoyer2009nonlinear}, where $Y = f_{\mathrm{AN}}(X) + \eta$; and the Post-Nonlinear Model (PNL) \citep{zhang2006extensions, zhang2012identifiability}, where $Y = f_2(f_1(X) + \eta)$ with $f_2$ invertible. \citet{shimizu2006linear, shimizu2011directlingam}, \citet{hoyer2009nonlinear}, and \citet{zhang2012identifiability} formalize identifiability results for LiNGAM, ANM, and PNL, respectively, and provide algorithms for estimating the causal direction.

Functional causal discovery methods have been used widely in neuroscience, epidemiology, psychology, genetics, and other domains in the medical and social sciences. For example,~\citet{saito2023causal} used LiNGAM to study factors contributing to nitrogen oxide generation in a coal-fired power plant.~\citet{rosenstrom2012pairwise} used LiNGAM to examine causal relationships between sleep and depression, while~\citet{hu2018application} applied ANMs for causal discovery in genetic studies.~\citet{song2017tell} applied ANM and PNL models to diverse bivariate datasets across geography, biology, physics, and economics.

Despite their widespread use, functional causal discovery methods remain limited by the absence of statistical guarantees---uncertainty quantification, error control, and consistency properties that form the basis of statistical inference \citep[e.g.,][]{wasserman2010statistics}. Our focus is on frequentist uncertainty quantification and inferential guarantees for functional causal discovery. Bayesian model-selection approaches provide posterior support for causal structures or directions under specified probabilistic models and priors \citep{hoyer2009bayesian, stegle2010probabilistic, dhir2024bivariate}; these posterior quantities represent a distinct inferential target from the frequentist guarantees and diagnostic assessment of assumption violations pursued here. In particular, many causal discovery methods lack accompanying frequentist inferential frameworks: they rely on deterministic outcomes by returning a single, potentially misleading point estimate of the causal direction. In addition, functional causal discovery depends on strong assumptions---such as specific functional forms---to identify the causal model from population information \citep[e.g.,][]{shimizu2006linear, hoyer2009nonlinear}. As these assumptions are likely to be violated in practice to various degrees, there is a clear need for inferential tools that quantify uncertainty and provide information about the extent of possible assumption violations.

In Section 2, we review the state of statistical guarantees provided by existing functional causal discovery methods. In Section 3, we introduce a principled way to assess causal discovery evidence and to recognize when the data do not support a definitive causal direction. We do so by repurposing goodness-of-fit and independence tests into a formal hypothesis-testing procedure for causal direction, which we term the \textit{test-based} approach for bivariate functional causal discovery. In Section 4, we demonstrate the use and behavior of the test-based approach through simulations that vary the degree of assumption violation, as well as through real-data applications. Finally, in Section 5, we situate the test-based approach within the broader causal discovery literature, offer practical guidance for applied researchers, and discuss directions for future work on strengthening statistical guarantees in functional causal discovery.

\section{Current State of Statistical Guarantees in Functional Causal Discovery}\label{sec-stat-guarantees}

\begin{table*}[htb]
\tiny
\setlength\tabcolsep{4pt}
\caption{Current state of statistical guarantees (column blocks) in functional causal discovery by literature component (row blocks). Our proposed bivariate test-based approach is boxed for emphasis. Model class coverage for each paper is given in column Model. A check (\cmark) indicates the criterion is explicitly supported; a cross (\xmark) indicates it is not; a triangle (\pmark) indicates partial or limited support. NA indicates that a criterion is not applicable.}
\label{tab:fcd-inference-summary}
\begin{tabularx}{\textwidth}{l *{10}{c}}
\toprule
& \multicolumn{4}{c}{\textbf{Inference}} & \multicolumn{2}{c}{\textbf{Decision}} & \multicolumn{1}{c}{\textbf{Model}} \\
\cmidrule(lr){2-5}\cmidrule(lr){6-7}\cmidrule(lr){8-8}
\makecell[l]{\textbf{Method / Paper}} &
\makecell[c]{Any\\inferential\\quantity} &
\makecell[c]{Formal\\inferential\\framework} &
\makecell[c]{Multiple\\forms\\of inference} &
\makecell[c]{Impact of\\assumption\\violations\\understood} &
\makecell[c]{Deterministic\\decision\\not enforced} &
\makecell[c]{Allows\\inconclusive\\outcome} &
\makecell[c]{Model\\class\\coverage} \\
\midrule
\textbf{Core Model Families} \\
\quad ~\cite{shimizu2011directlingam} &
\xmark & \xmark & \xmark & \xmark & \xmark & \xmark &  LiNGAM \\
\quad ~\cite{hoyer2009nonlinear} &
\cmark & \xmark & \xmark & \pmark & \cmark & \cmark & ANM \\
\quad ~\cite{peters2014causal} &
\cmark & \xmark & \xmark & \pmark & \pmark & \pmark & ANM \\
\quad ~\cite{zhang2012identifiability} &
\cmark & \xmark & \xmark & \pmark & \cmark & \cmark & PNL \\
\addlinespace[2pt]
\textbf{Software} \\
\quad ~\cite{Ikeuchi2023-lingam} &
\cmark & \xmark & \xmark & \xmark & \pmark & \pmark & LiNGAM\\
\quad ~\cite{Kalisch2012-pcalg} &
\xmark & \xmark & \xmark & \xmark & \xmark & \xmark &  LiNGAM \\
\quad ~\cite{Kalainathan2019-cdt} &
\xmark & \xmark & \xmark & \xmark & \xmark & \xmark & ANM \\
\quad ~\cite{Zheng2024-causallearn} &
\cmark & \xmark & \xmark & \xmark & \pmark & \pmark & LiNGAM/ANM/PNL\\
\addlinespace[2pt]
\textbf{Benchmarks/Applications} \\
\quad ~\cite{mooij_distinguishing_2016} &
\cmark & \xmark & \xmark & \pmark & \pmark & \pmark & ANM \\
\quad ~\cite{rosenstrom2012pairwise} &
\cmark & \cmark & \xmark & \xmark & \xmark & \xmark & LiNGAM \\
\quad ~\cite{motokawa2020causal} &
\cmark & \cmark & \xmark & \xmark & \xmark & \xmark & LiNGAM \\
\quad ~\cite{jiao2018bivariate} &
\cmark & \xmark & \xmark & \pmark & \pmark & \pmark & ANM \\
\quad ~\cite{hu2018application} &
\cmark & \xmark & \xmark & \pmark & \pmark & \pmark & ANM \\
\quad ~\cite{song2017tell} &
\cmark & \xmark & \xmark & \xmark & \xmark & \xmark & ANM/PNL \\
\addlinespace[2pt]
\textbf{Inferential Methods} \\
\quad ~\cite{komatsu2009} &
\cmark & \cmark & \xmark & \xmark & \xmark & \xmark & LiNGAM \\
\quad ~\cite{thamvitayakul2012} &
\cmark & \cmark & \xmark & \xmark & \xmark & \xmark & LiNGAM \\
\quad ~\cite{schkoda2025goodness} &
\cmark & \cmark & \xmark & \cmark & NA & NA & LiNGAM \\
\quad ~\cite{wang2025confidence} &
\cmark & \cmark & \xmark & \cmark & \cmark & \cmark & LiNGAM/ANM \\
\addlinespace[2pt]
\quad \fbox{\textbf{Test-based Approach (proposed)}} &
\cmark & \cmark & \cmark & \cmark & \cmark & \cmark & LiNGAM/ANM/PNL \\
\bottomrule
\end{tabularx}

\vspace{4pt}
\footnotesize
\textit{Notes.} ``Any inferential quantity'' includes $p$-values, bootstrap frequencies, or confidence sets. ``Formal inferential framework'' means that these quantities are developed and interpreted within a formalized inferential framework, such as a hypothesis-testing or confidence-set procedure. ``Multiple forms of inference'' indicates that multiple complementary inferential tools are integrated within a coherent framework, as recommended for reliable inference by \citet{wasserstein2019beyond}. All other columns are self-explanatory.
\end{table*}

To simplify the exposition, we view current functional causal discovery literature as comprising several distinct components. The first group are papers presenting core methodological frameworks such as the Linear Non-Gaussian Acyclic Model (LiNGAM), Additive Noise Model (ANM), and Post-Nonlinear (PNL) model~\citep{shimizu2006linear,peters2014causal,zhang2012identifiability}. The second group is comprised of papers that focus on software implementations of these methods, providing practical tools for applied researchers~\citep{Ikeuchi2023-lingam,Kalisch2012-pcalg,Kalainathan2019-cdt,Zheng2024-causallearn}. The third group is a set of benchmarking and applied studies that use these methods on real datasets to assess their empirical performance~\citep{mooij_distinguishing_2016,rosenstrom2012pairwise,motokawa2020causal,jiao2018bivariate,hu2018application,song2017tell}. The forth group are papers that aim to establish formal statistical guarantees for causal discovery~\citep{komatsu2009,thamvitayakul2012,schkoda2025goodness,wang2025confidence}. We consider the model class each paper addresses to understand the extent to which these statistical guarantees apply within the broader domain of functional causal discovery.

The papers in Table~\ref{tab:fcd-inference-summary} are intended to be representative rather than exhaustive lists, with the level of coverage varying by category: the methodological papers highlight the core model families, the software papers reflect widely used implementations, the applied studies illustrate case studies in the literature, and the inferential papers largely capture the existing formal work. The papers also vary in dimensional scope. Several focus on bivariate causal discovery, including \citet{mooij_distinguishing_2016}, \citet{rosenstrom2012pairwise}, \citet{jiao2018bivariate}, \citet{hu2018application}, and \citet{song2017tell}, as does our proposed approach. \citet{hoyer2009nonlinear} and \citet{zhang2012identifiability} develop their main identifiability results in the bivariate setting but also consider multivariate extensions, while the remaining papers address multivariate settings directly. Notably, although the framework of \citet{schkoda2025goodness} is multivariate, they illustrate their goodness-of-fit tests using bivariate cause–effect pairs.

We discuss statistical guarantees by distinguishing whether the methodology relies on a clearly defined inferential quantity---such as a $p$-value, confidence interval, or bootstrap frequency---and has a well-specified inferential framework---such as a formal hypothesis testing or a bootstrap framework---that situates the inferential quantity within a broader statistical inference structure.

The importance of both clearly defined inferential quantities and coherent inferential frameworks echoes broader guidance for statistical practice. The American Statistician’s statements on $p$-values~\citep{wasserstein2016asa, wasserstein2019beyond} emphasize that reliable inference is achieved by integration of multiple forms of inference within a framework that makes assumptions and uncertainty quantification explicit, rather than by reliance on ``bright-line rules for justifying scientific claims or conclusions'' which ``can lead to erroneous beliefs and poor decision making''~\citep[p.~2]{wasserstein2019beyond}. In the context of causal discovery, reliance on a single inferential device---such as a test statistic or a confidence set---may therefore limit interpretability and robustness. Accordingly, we assess whether each method incorporates multiple forms of inference as recommended for best practices~\citep{wasserstein2019beyond}.

We further distinguish between methods that always enforce a deterministic conclusion---typically by comparing a single numerical summary such as a test statistic or a p-value---and methods that allow for an inconclusive outcome when the available evidence is insufficient. The ASA’s Statement on $p$-values~\citep{wasserstein2016asa} emphasizes that ``[b]y itself, a $p$-value does not provide a good measure of evidence regarding a model or hypothesis,'' underscoring the broader principle that inferential summaries should not be used as automatic decision rules in isolation. This distinction is particularly relevant to causal discovery.~\citet{genin2020statistical,genin2024success} show that, although uniformly consistent procedures for orienting causal relationships may not exist under common modeling assumptions, weaker and practically meaningful guarantees remain attainable: causal directions may be statistically decidable or progressively solvable, with procedures permitted to remain inconclusive or to retract earlier conclusions as evidence accumulates. From this perspective, methods that accommodate inconclusive results more faithfully reflect the uncertainty inherent in statistical inference, whereas deterministic rules risk overstating the evidence when assumptions are violated and/or when the signal is weak.

The core methodological frameworks (LiNGAM/ANM/PNL) provide population-level identifiability and, for some procedures, asymptotic consistency guarantees~\citep{shimizu2006linear,peters2014causal,zhang2012identifiability,mooij_distinguishing_2016}. However, these results rely on assumptions that may often be violated in practice. Beyond asymptotic consistency, \citet{wang2020high} provide finite-sample guarantees for causal discovery in high-dimensional LiNGAMs under sparsity and distributional regularity conditions, allowing the number of variables to grow faster than the sample size. These guarantees nevertheless remain conditional on the assumed model and do not protect against violations of its underlying assumptions. The core LiNGAM papers~\citep{shimizu2006linear,shimizu2011directlingam} do not define a formal inferential quantity or framework for quantifying uncertainty in the selected causal direction; instead, they enforce a directional decision by comparing dependence or independence measures---such as mutual information---used as comparative ``scores.''

The use of statistical tests to assess residual–predictor independence, and to permit inconclusive causal discovery outcomes when both directions are rejected or fail to reject, has been discussed in the ANM and PNL contexts~\citep{hoyer2009nonlinear,zhang2012identifiability}. However, these discussions remain largely conceptual: the testing framework itself is not formally developed, and the meaning of each test outcome---particularly how rejection or non-rejection should be interpreted for causal direction identification---is not clearly specified. This lack of a well-defined inferential structure has led to heuristic uses of $p$-values in several studies and software implementations, such as selecting the direction with the larger $p$-value~\citep{peters2014causal}, rather than interpreting the $p$-values within a principled statistical testing framework.

Overall, none of the core model family (LiNGAM/ANM/PNL) papers we reviewed formalize an inferential framework or incorporate multiple complementary inferential tools. They tend to either enforce a deterministic directional decision~\citep{shimizu2006linear,shimizu2011directlingam,peters2014causal} or, at most, contain limited discussion of assumption violations and potential inconclusive outcomes (ANM/PNL)~\citep{hoyer2009nonlinear,zhang2012identifiability,peters2014causal}.

Existing software implementations~\citep[e.g.,][]{Kalisch2012-pcalg,Kalainathan2019-cdt} often follow a similar sequence of steps. That is, for a given a bivariate dataset of size $N$ on a pair of variables $X$ and $Y$, models are first fitted separately in each direction ($X \to Y$ and $Y \to X$); second, a dependence measure between the predictor and residuals (e.g., mutual information or Hilbert Schmidt Independence Criterion (HSIC)~\citep{gretton_kernel_2008}) is computed; and finally, the direction with the smaller dependence is taken as ``causal.'' These procedures provide no quantification of uncertainty. While some implementations report per-direction $p$-values~\citep{Zheng2024-causallearn,Ikeuchi2023-lingam}, it is left to the practitioner whether to interpret them inferentially or simply use them as comparative scores, as recommended by~\citet{peters2014causal}.

% Turning to the final group of 

Turning to benchmarking and applied studies, we examine how these works handle inference and decision-making in practice. Because the core model families and their software implementations provide limited support for formal statistical inference, similar limitations generally appear in their applications. For example, \citet{mooij_distinguishing_2016} present the 103 cause–effect benchmark pairs for causal discovery and apply ANM-based algorithms to these pairs. They provide some discussion of $p$-values and inconclusive outcomes but do not describe any formal inferential framework. In particular, they discuss using the HSIC independence measure as a heuristic score, selecting the model with the smaller HSIC estimate, rather than interpreting the corresponding $p$-values within a principled inferential structure \citep{mooij_distinguishing_2016}. 

Some applications use inferential tools to quantify uncertainty; however, these papers inherit the same limitations as the underlying inferential procedures. For example, several applications of LiNGAM-based methods exist in practice~\citep[e.g.,][]{saito2023causal,kotoku2020causal,rosenstrom2012pairwise,motokawa2020causal}, and some of these employ bootstrap methods to quantify uncertainty in the estimated causal direction~\citep[e.g.,][]{motokawa2020causal,rosenstrom2012pairwise}. As we discuss in the following section, these bootstrap results can be misleading when the assumptions of the underlying LiNGAM model are violated. Additionally, some applications of ANM-based methods use permutation tests to obtain $p$-values~\citep[e.g.,][]{jiao2018bivariate,hu2018application}. In these approaches, permutation is used to approximate a null distribution under which the variables are independent or no causal direction is present. The observed relationship between the variables is broken while their marginal distributions are preserved. Specifically, \citet{jiao2018bivariate} hold one variable fixed and permute the other, whereas \citet{hu2018application} permute $X$ and $Y$. For each permuted dataset, \citet{jiao2018bivariate} refit the directional models and recompute the test statistic, while \citet{hu2018application} recompute the distance-correlation statistic. These statistics form empirical null distributions against which the observed statistics are compared to obtain $p$-values. However, these works do not explicitly discuss how each decision outcome should be interpreted within a formal hypothesis-testing framework, and hypothesis testing for causal discovery remains largely informal and often only implicitly addressed~\citep{hoyer2009nonlinear}. Other applications~\citep[e.g.,][]{song2017tell} report $p$-values but rely on direct $p$-value comparisons to determine the favored direction.

The final group of papers comprises methods developed specifically to provide statistical inference for functional causal discovery (e.g.,~\citealp{komatsu2009,thamvitayakul2012,schkoda2025goodness,wang2025confidence}). Work in this area remains limited and
%There is a paucity of work aimed at building inferential tools for causal discovery (e.g.,~\citealp{komatsu2009,thamvitayakul2012,wang2025confidence,schkoda2025goodness}). The existing work in this area 
can largely be grouped into three categories: methods that estimate bootstrap-type reliability, those that construct confidence sets for causal structures, and goodness-of-fit tests for particular causal model classes.

\citet{komatsu2009} extends the LiNGAM framework with a multiscale bootstrap procedure to assess the stability of estimated causal orderings, which is implemented in the \texttt{lingam} Python package~\citep{Ikeuchi2023-lingam}. Their algorithm repeatedly refits LiNGAM on bootstrap samples, computes the mutual information between residuals as a dependence measure, and estimates the bias-corrected selection probability with which the same ordering is recovered. The resulting values represent approximately unbiased bootstrap probabilities that indicate how consistently LiNGAM selects a given causal ordering across resampled datasets. Similarly, \citet{thamvitayakul2012} adopt a resampling-based approach within the Direct-LiNGAM algorithm to estimate the uncertainty of directional signals, quantifying how frequently a particular edge orientation appears across bootstrap samples. While these methods represent important early steps toward incorporating uncertainty into causal discovery, they remain limited: the resulting probabilities depend on the assumed LiNGAM model class and can be misleading when the underlying assumptions are violated.

%Another recent contribution to statistical inference in causal discovery is that of 
\citet{schkoda2025goodness} develops formal goodness-of-fit tests for assessing whether an observed distribution is compatible with a linear non-Gaussian structural equation model, with or without latent confounding. Rather than determining causal direction, their approach provides diagnostic information about the adequacy of the LiNGAM model class and can therefore complement methods for causal direction estimation. \citet{schkoda2025goodness} apply their method to the Tübingen cause--effect pairs~\citep{mooij_distinguishing_2016} to ``offer a classification of the cause-effect pairs into a group for which a linear model without confounding is tenable, a group for which a linear model is tenable after inclusion of a single confounder and a group that may be best analyzed using nonlinear methods.''

\citet{wang2025confidence} provides the most general inferential frameworks currently available for uncertainty quantification in causal discovery. Their method applies to identifiable structural equation models with additive errors, encompassing both linear and nonlinear additive-noise models. \citet{wang2025confidence} construct a confidence set of causal orderings by inverting a goodness-of-fit test. Thus, for a bivariate case, the set may contain one direction, neither, or both, reflecting whether the data provide sufficient evidence to distinguish the models. When the set contains neither direction, the procedure offers some insight into model-class misspecification, as this indicates that the ``model class does not capture the data-generating process''~\citep[p.~2]{wang2025confidence}. Although confidence sets may carry diagnostic information---for example, the set that contains both directions indicates that the data do not distinguish between them---the paper does not discuss or formally develop a diagnostic framework. Conceptually, this approach parallels a hypothesis-testing framework, since confidence sets and tests are dual procedures, and does not incorporate another complementary form of inference. In addition, it does not currently extend to the PNL model class where noise enters non-additively.
% However, this discussion is limited to that specific case, and further examination of other types of assumption violations could broaden the framework’s interpretability. Conceptually, this approach parallels a hypothesis-testing framework, since confidence sets and tests are dual procedures. However, it does not currently extend to the PNL model class, where noise enters non-additively. 

Based on our review, we identify a clear need for a formal inferential framework that can be paired with any functional causal discovery model class (LiNGAM, ANM, or PNL). Such a framework should integrate multiple complementary inferential tools within a coherent structure that makes both assumptions and uncertainty explicit. In this paper, we aim to address this gap through our proposed \emph{test-based approach}.

\section{A Hypothesis-Testing Framework for Bivariate Functional Causal Discovery}\label{sec-meth}

In this section, we develop a framework for functional causal discovery that is grounded in classical hypothesis testing. This framework, which we term the \textit{test-based approach}, repurposes standard goodness-of-fit and independence tests into a formal procedure for causal direction detection and inference. As summarized in Table~\ref{tab:fcd-inference-summary}, the test-based approach provides a formal inferential framework that integrates multiple complementary forms of inference and includes an explicit ``inconclusive'' outcome whose behavior under assumption violations is well understood. It can be paired with any major functional causal discovery model class, including PNL and its special cases, LiNGAM and ANM. These aspects place it in a favorable position within the broader literature in terms of its statistical guarantees.

Given a bivariate dataset of size $N$ on a pair of variables $X$ and $Y$, the test-based framework infers directionality through setting up two hypothesis tests, one for each direction, each conducted with controlled Type~I error. In addition to the Type~I error control provided by the underlying hypothesis tests, we introduce a resampling-based procedure that estimates and constructs confidence intervals for the frequencies with which the test-based approach yields each causal discovery outcome---(a) rejecting the tests for both directions $X \to Y$ and $Y \to X$, (b) failing to reject both tests, (c) rejecting only the test for $X \to Y$, or (d) rejecting only the test for $Y \to X$. This procedure adds an additional layer of uncertainty quantification beyond the formal guarantees of the tests themselves. Consequently, the test-based approach provides formal statistical guarantees for functional causal discovery, whereas most existing methods yield only deterministic decisions without any accompanying assessment of uncertainty (e.g., \cite{peters2014causal, shimizu2011directlingam}). We illustrate the methodology when it is paired with the bivariate Post-Nonlinear (PNL) model class, which encompasses the Additive Noise Model (ANM) and LiNGAM as special cases.

\subsection{Setup}

Suppose we have a dataset of size $N$ on a pair of variables $X$ and $Y$, where a causal direction exists. In bivariate causal discovery, the task is to distinguish between $X \rightarrow Y$ and $Y \rightarrow X$. Under Assumptions~\ref{assump:model}, the causal direction is identified at the population level under the PNL model class~\cite{zhang2012identifiability}.

\begin{identifiabilityassumptions}\label{assump:model} \:\
\begin{enumerate}[label=(A\arabic*)]
    \item The data-generating mechanism is correctly specified within the assumed functional model class.
    \item $X$ and $Y$ have an acyclic relationship.
    \item $X$ and $Y$ are unconfounded, meaning they share no common causes.
    \item The data are independent and identically distributed (i.i.d.).
    \item None of the five unidentifiable cases characterized by \citet{zhang2012identifiability}
    (summarized in Appendix) occur; these include,
    for example, situations where the relationship between $X$ and $Y$ is linear and all random
    variables are Gaussian.
\end{enumerate}
\end{identifiabilityassumptions}

\begin{figure}[htb]
    \centering
    \includegraphics[width=\linewidth]{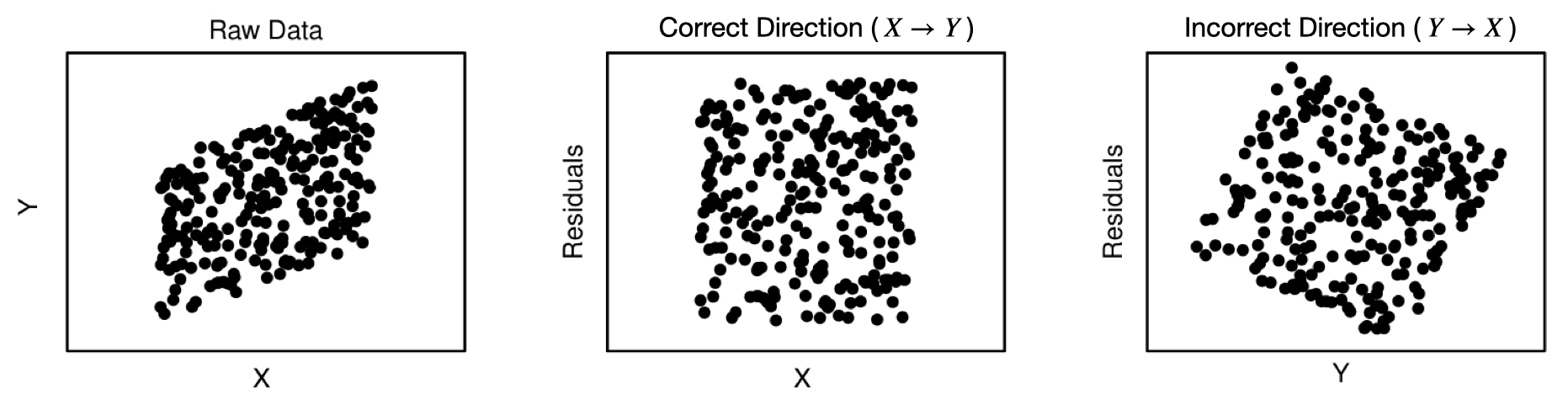}
    \caption{Left: Raw data. Middle: Residuals from regressing the outcome on the predictor in the correct causal direction. Right: Residuals from regressing the predictor on the outcome in the incorrect direction. Adapted from \citet{wang2025confidence}.}
    \label{fig:asymmetry_example}
\end{figure}

Assumption A1 makes explicit a condition that is implicitly required when applying population-level identifiability results in practice. Identifiability theory assumes knowledge of the true functional model and asks whether the resulting joint distribution admits that model representation in only one causal direction. In finite-sample applications, however, the fitted model class must adequately represent the data-generating mechanism for the identifiability result to inform estimation of the causal direction. We therefore state A1 explicitly because model misspecification can affect the inferred direction.

When Assumptions \ref{assump:model} are satisfied, the noise term in the true causal direction is independent of the predictor but not the other way around. This property can be exploited to infer the causal direction. To make this dependence structure explicit, assume the true data-generating mechanism is given by:

\begin{equation}
    Y = f_2(f_1(X) + \eta),
    \label{eq:gen_y}
\end{equation}

where $f_1$ and $f_2$ are the generating functions, and $f_2$ is invertible. The noise term $\eta$ is, by construction, independent of the predictor $X$.  

Now, suppose $f_1$ is also invertible. Then we can write:
\begin{equation}
    X = f_1^{-1}(f_2^{-1}(Y)-\eta).
    \label{eq:X_def}
\end{equation}

Next, consider attempting to represent the data-generating mechanism in the reverse direction, $Y \to X$. Specifically, we consider representations of the form
\begin{equation}
X = g_2\!\big(g_1(Y) + \xi\big),
\label{eq:reverse_model}
\end{equation}
where $g_1$ and $g_2$ are unknown functions and $g_2$ is assumed to be invertible. Let $\mathcal{G}_1$ and $\mathcal{G}_2$ denote the function classes from which $g_1$ and $g_2$ are chosen (e.g., linear functions under LiNGAM).

To select a particular representation within this class, we define $(g_1^\star, g_2^\star)$ as population risk minimizers under a loss function $\ell$:
\begin{equation}
(g_1^\star,g_2^\star) \in 
\arg\min_{g_1\in\mathcal{G}_1,\; g_2\in\mathcal{G}_2}
\mathbb{E}\!\left[\ell\!\left(g_2^{-1}(X)-g_1(Y)\right)\right].
\label{eq:reverse_pop_risk}
\end{equation}

Given $(g_1^\star,g_2^\star)$, we define the reverse-direction residual as
\begin{equation}
\xi := g_2^{\star -1}(X)-g_1^\star(Y),
\label{eq:xi_def}
\end{equation}
so that \eqref{eq:reverse_model} holds by construction.

The functions $(g_1^\star,g_2^\star)$ represent the best-fitting functions within the chosen model classes when attempting to express the data-generating process in the reverse direction, and depend on both the model class and the loss function $\ell$.
They are introduced to illustrate how dependence arises when fitting the incorrect causal direction.
According to \eqref{eq:X_def}, the residual $\xi$ can be written as a deterministic function of $(Y,\eta)$. Since $Y$ itself is generated from $\eta$ via \eqref{eq:gen_y}, we generally have $Y \not\perp \xi$.
Thus, if $X \to Y$ is the true data-generating mechanism, then $X \perp \eta$ but typically $Y \not\perp \xi$, except in the well-known unidentifiable cases under Assumption~A5 (e.g., the linear-Gaussian setting).

Figure~\ref{fig:asymmetry_example} illustrates this asymmetry property between the correct and incorrect causal directions. Additional intuition for how $\xi$ behaves under different properties (e.g., additive $f_1$) is provided in the Appendix.

As shown above, the PNL model relies on independence between residuals and predictors to infer the causal direction---a property that holds only when the fitted model is correctly specified. Thus, assessing the causal direction naturally involves evaluating both model adequacy (goodness-of-fit) and independence, recasting causal discovery as a joint problem of goodness-of-fit and independence testing. Building on this observation, the next section introduces the \textit{test-based approach}, a hypothesis-testing framework for statistical inference in functional causal discovery.

\subsection{Test-based Approach}

The test-based approach translates the causal discovery task of selecting between the directions $X \rightarrow Y$ and $Y \rightarrow X$ into the following pair of hypotheses:

\begin{equation}
\label{eq:h_arrow}
\begin{aligned}
H_{X\to Y}^0 &:~ X \rightarrow Y, \quad &H_{X\to Y}^1 &:~ \text{otherwise},\\[4pt]
H_{Y\to X}^0 &:~ Y \rightarrow X, \quad &H_{Y\to X}^1 &:~ \text{otherwise.}
\end{aligned}
\end{equation}

Under the key assumptions of PNL causal discovery \eqref{assump:model}, the noise is independent of the predictor in the true causal direction. Thus, the pair of hypotheses from~\eqref{eq:h_arrow} can be reformulated to test both independence and goodness of fit in each direction. Specifically, for prespecified function classes $\mathcal{F}_1, \mathcal{F}_2, \mathcal{G}_1, \mathcal{G}_2$, we consider:

\begingroup
\footnotesize
\begin{equation}
\label{eq:h_indep}
\begin{aligned}
H_{X\to Y}^0: \exists\, f_1\in \mathcal{F}_1,\ f_2 \in \mathcal{F}_2 \ \text{s.t.}\ Y = f_2(f_1(X) + \eta),\ f_2 \ \text{invertible},\ X \perp \eta, \:\ H_{X\to Y}^1: otherwise; \\[0.5em]
        H_{Y\to X}^0: \exists\, g_1 \in \mathcal{G}_1,\ g_2 \in \mathcal{G}_2 \ \text{s.t.}\ X = g_2(g_1(Y) + \xi),\ g_2 \ \text{invertible},\ Y \perp \xi, \:\ H_{Y\to X}^1: otherwise.
\end{aligned}
\end{equation}
\endgroup

Here, the function classes $\mathcal{F}_1, \mathcal{F}_2, \mathcal{G}_1, \mathcal{G}_2$ can encode modeling assumptions (e.g. linearity) and are typically taken to be the same across directions (i.e., $\mathcal{F}_1 = \mathcal{G}_1, \mathcal{F}_2 = \mathcal{G}_2$). One can choose among different independence tests to conduct the hypothesis testing in~\eqref{eq:h_indep}. For example, we use the test introduced by \citet{sen_testing_2014}, which simultaneously assesses (i) independence between the residuals and the predictor using the Hilbert–Schmidt independence criterion and (ii) the goodness of fit within the specified function classes. Although this test was developed for additive noise models, we use it in our simulations for PNL models. Developing general formal theoretical guarantees for this application to PNL models is beyond the scope of this paper. Nevertheless, the numerical results in Section~\ref{sec-sims} and the Appendix provide empirical evidence that, under the PNL misspecification settings considered, the test rejects both causal directions due to model misspecification, including when the inner function, the outer function, or both are misspecified.

When incorporated into a hypothesis-testing framework, the test-based approach provides a principled way to assess whether a proposed directional model is compatible with the data, in contrast to methods that rely solely on comparing dependence measures (scores). This enables detection of situations where model assumptions are not satisfied, leading to inconclusive outcomes.

% \begin{table}[htb]
%     \centering
%     \caption{Causal discovery outcomes and their interpretation when using hypothesis tests in \eqref{eq:h_indep} to test for goodness-of-fit and independence test}
%     \begin{tabular}{p{6cm}|p{9.5cm}}
%     \hline
%          \textbf{Test Outcome} & \textbf{Interpretation} \\
%          \hline
%          {Reject both $H_{X\to Y}^0$ and $H_{Y\to X}^0$.} & {\textit{Inconclusive} outcome; can arise from model misspecfication. %The null $H_0$ (\ref{eq:h0}) would not hold in both directions in this case.
%          }
%          \\
%          \hline
%          {Fail to reject both $H_{X\to Y}^0$ and $H_{Y\to X}^0$.} & {\textit{Inconclusive} outcome; can arise from identifiability issues or small sample size.
%          }\\
%          \hline
%          {Reject $H_{X\to Y}^0$ and fail to reject $H_{Y\to X}^0$.} & {Favors the direction $Y \rightarrow X$.}\\
%          \hline
%          {Fail to reject $H_{X\to Y}^0$ and reject $H_{Y\to X}^0$.} & {Favors the direction $X \rightarrow Y$.}\\
%          \hline
%     \end{tabular}
%     \label{Tab:sen_assume}
% \end{table}

\begin{table}[htb]
    \centering
    \small
    \caption{Causal discovery outcomes and their interpretation when using hypothesis tests in \eqref{eq:h_indep} to test for goodness-of-fit and independence test}
    \begin{tabular}{p{4cm}|p{5cm}|p{6cm}}
    \hline
    & \textbf{Reject $H_{Y\to X}^0$} & \textbf{Fail to reject $H_{Y\to X}$} \\
    \hline
    \textbf{Reject $H_{X\to Y}^0$} 
    & Inconclusive; potential model misspecification. 
    & Favors $Y \rightarrow X$ \\
    \hline
    \textbf{Fail to reject $H_{X\to Y}^0$} 
    & Favors $X \rightarrow Y$ 
    & Inconclusive; potential identifiability issues or small sample size. \\
    \hline
    \end{tabular}
    \label{Tab:sen_assume}
\end{table}

Table~\ref{Tab:sen_assume} summarizes how different assumption violations correspond to each possible outcome under the test-based approach. Rejecting $H_{X\to Y}^0$ while failing to reject $H_{Y\to X}^0$ favors the direction $X \to Y$, whereas rejecting $H_{Y\to X}^0$ while failing to reject $H_{X\to Y}^0$ favors $Y \to X$. When both null hypotheses are rejected, the outcome is inconclusive, which occurs whenever neither causal direction provides an adequate representation of the data under the assumed model families. A common reason for this is model misspecification (i.e., violation of Assumption~A1). While the Post-Nonlinear (PNL) model class is flexible enough to represent a wide range of data-generating mechanisms, practical implementations rely on estimated functions (e.g.,~$\hat{f}_1, \hat{f}_2$) that may not adequately capture the true relationships. As a result, the test-based procedure can still detect apparent model misspecification when sample size is limited or when the functional form lacks sufficient flexibility to approximate the true data-generating process. Moreover, for additive-noise models (ANMs) and LiNGAM, which are special cases of PNL, both null hypotheses may be rejected if the true data-generating process lies outside those restricted model classes (e.g., when the errors are not additive or the model is not linear, respectively). Conversely, when neither null hypothesis is rejected, the outcome is also inconclusive, reflecting situations in which the available data do not provide sufficient statistical power to distinguish between directions under the assumed model families. Inconclusive outcomes can arise from small sample sizes or identifiability issues. In the LiNGAM setting, for instance, this occurs under Gaussianity, where the model becomes unidentifiable \citep{shimizu_use_2008}.

In addition to inferential guarantees provided by hypothesis testing, we recommend incorporating a resampling-based procedure that estimates the frequency of each causal discovery outcome (Table~\ref{Tab:sen_assume}) and constructs associated confidence intervals for the test-based approach. For example, if one were to resample with replacement $S$ times, the rate of favoring the direction $X \to Y$ would be 

\begin{equation}
\label{eq:outcome_rate}
    \frac{1}{S}\sum_{s=1}^S \mathbf{I}(\text{Reject $H_{Y\to X}^0$ and fail to reject $H_{X\to Y}^0$ for replication $s$}).
\end{equation}

\begin{theorem}[Consistency and confidence intervals for bootstrap outcome rates]
\label{thm:bootstrap_outcome_rate}
Let \(\widehat p_{X\to Y,S}\) denote the outcome rate in Equation \ref{eq:outcome_rate}, and define the conditional bootstrap outcome probability
\[
p_{X\to Y,N}^{*}
=
\mathbb{P}^{*}\!\left(
\text{Reject }H_{Y\to X}^{0}
\text{ and fail to reject }H_{X\to Y}^{0}
\,\middle|\,D_N
\right),
\]
where \(\mathbb{P}^{*}\) denotes probability under the bootstrap distribution conditional on the observed dataset \(D_N\).

If the \(S\) bootstrap samples are drawn independently and the same test-based procedure is applied to each sample, then, conditional on \(D_N\),
\[
\widehat p_{X\to Y,S}
\xrightarrow{\mathrm{a.s.}}
p_{X\to Y,N}^{*}
\qquad\text{as }S\to\infty.
\]
Moreover, if \(0<p_{X\to Y,N}^{*}<1\), then
\[
\sqrt{S}\left(
\widehat p_{X\to Y,S}-p_{X\to Y,N}^{*}
\right)
\xrightarrow{d}
\mathcal{N}\!\left(
0,\,
p_{X\to Y,N}^{*}
\left(1-p_{X\to Y,N}^{*}\right)
\right).
\]
Consequently, an approximate \(100(1-\alpha)\%\) Monte Carlo confidence interval for \(p_{X\to Y,N}^{*}\) is
\[
\widehat p_{X\to Y,S}
\pm
z_{1-\alpha/2}
\sqrt{
\frac{
\widehat p_{X\to Y,S}
\left(1-\widehat p_{X\to Y,S}\right)
}{S}
}.
\]

Now define the population-level outcome probability
\[
p_{X\to Y,N}
=
\mathbb{P}\!\left(
\text{Reject }H_{Y\to X}^{0}
\text{ and fail to reject }H_{X\to Y}^{0}
\right),
\]
where the probability is taken over an i.i.d.\ dataset \(D_N\) of size \(N\). Suppose the bootstrap is consistent for this outcome in the sense that
\[
p_{X\to Y,N}^{*}-p_{X\to Y,N}
\xrightarrow{p}0
\qquad\text{as }N\to\infty.
\]
If \(S=S_N\to\infty\), then
\[
\widehat p_{X\to Y,S_N}-p_{X\to Y,N}
\xrightarrow{p}0.
\]
Furthermore, if \(X\to Y\) is the true causal direction and the test-based procedure is consistent, so that
\[
p_{X\to Y,N}\to 1,
\]
then
\[
\widehat p_{X\to Y,S_N}
\xrightarrow{p}1.
\]
The same results hold separately for the bootstrap rate of each fixed causal discovery outcome.
\end{theorem}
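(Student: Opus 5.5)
The plan is to work conditionally on $D_N$ first, where everything reduces to Bernoulli sampling, and only then uncondition. Fix $D_N$ and write $B_s\in\{0,1\}$ for the indicator in \eqref{eq:outcome_rate} for the $s$-th bootstrap replicate. Each bootstrap sample is drawn independently from the empirical distribution of $D_N$, and the same deterministic test-based procedure is applied to each one. (If the tests use internal randomization, such as permutation $p$-values, we fold that auxiliary randomness into $\mathbb{P}^*$ and require it to be drawn independently across $s$.) Hence, conditional on $D_N$, the $B_s$ are i.i.d.\ Bernoulli$(p^*_{X\to Y,N})$. Since $\widehat p_{X\to Y,S}=S^{-1}\sum_s B_s$:
\begin{itemize}
\item The strong law of large numbers gives the almost-sure convergence.
\item The classical Lindeberg--L\'evy CLT, with variance $p^*(1-p^*)>0$, gives the normal limit.
\item The confidence interval follows from Slutsky's theorem: by the SLLN, $\widehat p(1-\widehat p)\to p^*(1-p^*)$ almost surely, so the studentized statistic is asymptotically standard normal.
\end{itemize}
None of this requires anything about the tests beyond measurability.

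For the unconditional statement, I decompose
\[
\widehat p_{X\to Y,S_N}-p_{X\to Y,N}=\big(\widehat p_{X\to Y,S_N}-p^*_{X\to Y,N}\big)+\big(p^*_{X\to Y,N}-p_{X\to Y,N}\big).
\]
The second term is $o_p(1)$ by the assumed bootstrap consistency. The first term is where the main subtlety lies. The SLLN above holds for a fixed $D_N$ as $S\to\infty$, but here $S=S_N$ and the data change with $N$, so pointwise a.s.\ convergence does not transfer directly. I would handle this with a uniform, non-asymptotic bound instead. Conditionally on $D_N$, the Bernoulli variance gives
\[
\mathbb{E}^*\big[(\widehat p_{X\to Y,S_N}-p^*_{X\to Y,N})^2\big]=\frac{p^*(1-p^*)}{S_N}\le \frac{1}{4S_N}.
\]
Chebyshev's inequality then yields $\mathbb{P}(|\widehat p-p^*|>\varepsilon\mid D_N)\le 1/(4S_N\varepsilon^2)$. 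Because this bound does not depend on $D_N$, taking expectations over $D_N$ gives the same bound unconditionally, and it tends to $0$ since $S_N\to\infty$. Hoeffding's inequality would serve equally well. So the first term is $o_p(1)$, and the two pieces combine to give $\widehat p_{X\to Y,S_N}-p_{X\to Y,N}\xrightarrow{p}0$.

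The final claim is then immediate. If $p_{X\to Y,N}\to1$ deterministically, then $\widehat p_{X\to Y,S_N}=(\widehat p-p_{X\to Y,N})+p_{X\to Y,N}\xrightarrow{p}1$. For any other fixed outcome in Table~\ref{Tab:sen_assume}, the argument is verbatim, with $B_s$ replaced by the indicator of that outcome; nothing used the specific event. The only step needing care is the uniformity in $D_N$ in the triangular-array step, and the distribution-free variance bound $1/(4S_N)$ resolves it.
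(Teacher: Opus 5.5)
Your proposal is correct and follows the paper's proof essentially step for step. Both arguments use the conditional i.i.d.\ Bernoulli structure with the SLLN and CLT, then the decomposition $\widehat p_{X\to Y,S_N}-p_{X\to Y,N}=(\widehat p_{X\to Y,S_N}-p^{*}_{X\to Y,N})+(p^{*}_{X\to Y,N}-p_{X\to Y,N})$, handled by Chebyshev with the $1/(4S_N)$ variance bound plus the assumed bootstrap consistency; your added remarks (Slutsky for the studentized interval, and that the Chebyshev bound is free of $D_N$ and so survives unconditioning) make rigorous two steps the paper leaves implicit.
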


\begin{proof}
Conditional on \(D_N\), the indicators in Equation~\ref{eq:outcome_rate} are i.i.d.\ Bernoulli random variables with success probability \(p_{X\to Y,N}^{*}\). The conditional law of large numbers and central limit theorem therefore yield the stated almost-sure consistency and asymptotic normality as \(S\to\infty\), respectively~\citep{casella2002statistical}.

For the population-level result, write
\[
\widehat p_{X\to Y,S_N}-p_{X\to Y,N}
=
\left(
\widehat p_{X\to Y,S_N}-p_{X\to Y,N}^{*}
\right)
+
\left(
p_{X\to Y,N}^{*}-p_{X\to Y,N}
\right).
\]
Conditional on \(D_N\),
\[
\operatorname{Var}^{*}\!\left(
\widehat p_{X\to Y,S_N}
\mid D_N
\right)
=
\frac{
p_{X\to Y,N}^{*}
\left(1-p_{X\to Y,N}^{*}\right)
}{S_N}
\leq
\frac{1}{4S_N}.
\]
Thus, by Chebyshev's inequality,
\[
\widehat p_{X\to Y,S_N}-p_{X\to Y,N}^{*}
\xrightarrow{p}0
\]
as \(S_N\to\infty\). The second term converges to zero in probability by the assumed bootstrap consistency. Hence,
\[
\widehat p_{X\to Y,S_N}-p_{X\to Y,N}
\xrightarrow{p}0.
\]
Finally, if \(p_{X\to Y,N}\to 1\), then
\[
\widehat p_{X\to Y,S_N}
=
\left(
\widehat p_{X\to Y,S_N}-p_{X\to Y,N}
\right)
+
p_{X\to Y,N}
\xrightarrow{p}1.
\]
\end{proof}

% This frequency can be viewed as an estimate of the probability of each outcome, since it is the sample mean of the corresponding indicator function. Conditional on the observed dataset $D_N$, the bootstrap replicates are i.i.d., so by the conditional law of large numbers and central limit theorem, \eqref{eq:outcome_rate} is consistent and asymptotically normal, allowing for confidence intervals to be constructed using the normal approximation~\citep{casella2002statistical}.

This resampling step quantifies uncertainty in the causal direction by assessing the stability of each outcome across bootstrap replicates. Conceptually, this is a bootstrap procedure, much like those used in the LiNGAM literature \citep[e.g.,][]{komatsu2009, thamvitayakul2012}, in that we repeatedly draw bootstrap datasets and recompute the causal conclusion. In the bivariate setting, both approaches estimate how often each causal outcome would be recovered under repeated sampling. However, the test-based framework generalizes this idea by resampling the data and re-running both hypothesis tests on each bootstrap dataset. Each replicate therefore yields a pair of hypothesis test outcomes (e.g., $\text{Reject $H_{X\to Y}^0$ and fail to reject $H_{Y\to X}^0$ for replication $s$}$) and the resulting bootstrap distribution estimates the joint distribution of these two hypothesis-test outcomes. In contrast, LiNGAM bootstrap procedures resample the data, refit the LiNGAM algorithm on each bootstrap sample, and record only the resulting causal ordering; the bootstrap distribution therefore reflects how often the algorithm selects $X \to Y$ versus $Y \to X$ across replicates. Moreover, LiNGAM bootstrap methods assess the stability of estimated causal coefficients or graph structures within a specific model class, whereas our approach applies the same resampling principle directly to the joint hypothesis-test outcomes. 
% The test-based approach---paired with a functional causal discovery model class (e.g., LiNGAM, ANM, or PNL)---yields uncertainty quantification within a formal inferential framework.

As summarized in Table \ref{tab:fcd-inference-summary}, the test-based approach occupies a favorable position among existing functional causal discovery methods by providing a unified inferential framework that integrates two complementary forms of inference within a hypothesis-testing paradigm. The discussion below elaborates on how these advantages arise.

The test-based approach extends and improves upon using LiNGAM, ANM, and PNL causal discovery algorithms in several important ways. Rather than relying solely on the comparison of dependence measures or test statistics, the test-based approach obtains two $p$-values corresponding to hypothesis tests in each direction. The results of these hypothesis tests convey uncertainty in the direction estimate and provide insights into assumption violations. Although two tests are performed, they correspond to different modeling assumptions---one under $X \to Y$ and another under $Y \to X$---and their outcomes are interpreted jointly. Because these hypotheses are logically coupled, standard multiple-comparison adjustments are neither required nor appropriate; the inferential target is the pattern of rejections rather than the marginal significance of each test. 

When Assumptions~\ref{assump:model}  hold, the two $p$-values are perfectly correlated, meaning that, with enough samples, rejecting in one direction would imply a failure to reject in the other. Therefore, given no assumption violations and a sufficiently large sample size, the test-based approach would provide the same information as a traditional functional causal discovery method. 

However, one cannot be sure that strict assumptions always hold in practice. Under assumption violations, the two $p$-values would not be perfectly correlated. Then, the test-based approach would convey this information by either rejecting or failing to reject in both directions, resulting in an inconclusive directionality (see Table \ref{Tab:sen_assume}). In contrast, many implementations of functional causal discovery methods output a single causal direction (e.g., $X \to Y$) without providing the option for an inconclusive outcome (e.g., \cite{Kalisch2012-pcalg,Ikeuchi2023-lingam,Kalainathan2019-cdt}).

As also reflected in Table \ref{tab:fcd-inference-summary}, previous work (e.g.,~\cite{shimizu2011directlingam, hoyer2009nonlinear, peters2014causal, zhang2012identifiability, mooij_distinguishing_2016}) has not formalized an inferential framework that integrates hypothesis testing with additional inferential quantities to provide statistical guarantees in functional causal discovery. Unlike the bootstrap-based approaches of \citet{komatsu2009} and \citet{thamvitayakul2012} that assess stability under the assumption that the LiNGAM causal model is correctly specified, our framework explicitly incorporates information about model assumptions through formal hypothesis tests. Moreover, unlike the confidence-set approach proposed by \citet{wang2025confidence}, our framework can be paired with the broader Post-Nonlinear (PNL) model class and offers multiple forms of inference including uncertainty quantification---both through hypothesis testing itself and through the estimated rates of causal discovery outcomes. We now turn to numerical results to demonstrate the test-based approach in practice.

\addtolength{\textheight}{-.2in}%

\section{Numerical Results}\label{sec-sims}

In this section, we illustrate the use and behavior of the test-based approach through both simulated and real data. In the simulations studies in Section~\ref{sec-simulations}, we pair the test-based framework with the LiNGAM and PNL model classes and assess its performance in controlled settings with varying degrees of assumption violations. 

For real data analyses in Section~\ref{sec-realdata}, we pair the approach with LiNGAM to demonstrate its application in real settings where one might ordinarily apply LiNGAM-based causal discovery and where the extent of assumption violations is unknown. In this section, we limit our illustrations to the LiNGAM model class. This is because allowing for more extensive PNL or ANM model classes would require additional machine-learning steps to estimate the functions $f_1$ (and $f_2$ in the PNL case). Such model-selection decisions---choosing architectures, tuning hyperparameters, and optimizing nonlinear predictors---are outside the scope of this paper and would distract from our main emphasis on illustrating inferential guarantees provided by the test-based approach.

\subsection{Simulations}\label{sec-simulations}

In our simulations, we pair the test-based approach to two model classes, LiNGAM and PNL. For both classes, the simulations illustrate how the hypothesis-testing formulation in~\eqref{eq:h_indep} can be implemented in practice within a given model class and how the resampling step yields empirical causal outcome rates that quantify uncertainty. However, the scope and emphasis of the simulation studies differ across the two model classes.

Specifically, for the LiNGAM model class, we use simulations to demonstrate how the test-based approach paired with LiNGAM behaves under both model misspecification and unidentified settings. In this case, we further compare the resulting decisions and resampling-based outcome rates to those obtained from DirectLiNGAM with bootstrap rates, as is commonly used in applications (see Table~\ref{tab:fcd-inference-summary}). This comparison highlights differences in the inferential information provided by the two approaches, particularly in settings where model assumptions fail.

For the PNL model class, we focus exclusively on the behavior of the test-based approach under varying degrees of model misspecification. We do not consider unidentified PNL settings, as the linear–Gaussian case already provides the most intuitive and canonical example of non-identifiability in functional causal models, which is examined in the LiNGAM simulations. As a result, the PNL simulations are used to assess how the test-based approach responds to misspecification within a flexible nonlinear model class. In addition, because existing PNL-based methods do not provide a comparable bootstrap-based inferential procedure, we do not include a direct comparison to alternative PNL methods with inference in this setting.

\subsubsection{LiNGAM Model Class Simulations}

When paired with LiNGAM, the hypothesis tests in the test-based approach can be formalized as follows:
\begin{equation}
\label{eq:h_indep_linear}
\begin{aligned}
        H_{X\to Y}^0: X \perp \eta,\; \text{relationship between } X \text{ and } Y \text{ is linear},
        \quad H_{X\to Y}^1: \text{otherwise}, \\
        H_{Y\to X}^0: Y \perp \xi,\; \text{relationship between } Y \text{ and } X \text{ is linear},
        \quad H_{Y\to X}^1: \text{otherwise}.
\end{aligned}
\end{equation}

We view this as a special case of~\eqref{eq:h_indep} in which $f_1$ and $g_1$ are restricted to be linear and $f_2$ and $g_2$ are identity maps, yielding linear structural equations with additive noise.

To study the behavior of the test-based approach paired with LiNGAM and DirectLiNGAM, we consider two forms of assumption violations: (i) varying degrees of linear model misspecification (violation of Assumption A1) and (ii) data-generating processes that span a range of non-Gaussianity settings, including the linear–Gaussian setting, which is unidentified under Assumption A5.

For each simulation setting, we generate \(M = 100\) independent datasets, each of size \(N = 3000\), from the same data-generating process under the true causal direction \(X \rightarrow Y\). For each of the \(M\) simulated datasets, we run the test-based approach to jointly test for independence and goodness-of-fit of the linear model, as formalized in~\eqref{eq:h_indep_linear}, using the combined independence and goodness-of-fit test based on the Hilbert--Schmidt Independence Criterion (HSIC) proposed by \citet{sen_testing_2014}. We compare the results with those from DirectLiNGAM,\footnote{We use DirectLiNGAM rather than ICA-LiNGAM because DirectLiNGAM, under its assumptions, is guaranteed to converge to the correct causal ordering in a finite number of iterations as the sample size tends to infinity \citep{shimizu2011directlingam}. This guarantee does not hold for ICA-LiNGAM. In the bivariate case, DirectLiNGAM fits linear models in both directions and determines the causal direction by assessing the independence between the predictor and its residuals. For each direction, the algorithm computes a user-specified measure of independence and selects the direction that minimizes this measure.} ensuring comparability by also using HSIC as the independence measure within DirectLiNGAM. For each of the \(M\) simulated datasets, we additionally compute the causal outcome rates~\eqref{eq:outcome_rate} for the test-based approach and the bootstrap rates for DirectLiNGAM following the procedure of \citet{thamvitayakul2012}.

\begin{table}[htb]
    \footnotesize
    \centering
    \caption{Simulation settings for varying linearity.}
    \begin{tabular}{p{3.25cm}|p{3cm}|p{6cm}}
    \hline
        Linear & Polynomial = 1 & $Y = \operatorname{sign}(X-a)\lvert X-a\rvert \beta + \eta$\\ \hline
        Moderately nonlinear & Polynomial = 1.5 & $Y = \operatorname{sign}(X-a)\lvert X-a\rvert^{1.5}\beta + \eta$\\ \hline
        Nonlinear & Polynomial = 3  & $Y = \operatorname{sign}(X-a)\lvert X-a\rvert^3\beta + \eta$\\
        \hline
    \end{tabular}
    \label{tab:lin_set_table}
\end{table}

\begin{table}[htb]
    \centering
    \caption{Simulation settings for varying levels of Gaussianity. Here $k$ is the number of mixture components in the Gaussian mixture model (GMM).}
    \begin{tabular}{p{4cm}|p{6cm}}
    \hline
        Gaussian & $X \sim N(0,1), \eta \sim N(\mu_1, \sigma_1)$\\ \hline
        Slightly non-Gaussian & $X \sim N(0,1), \eta \sim \text{GMM}(k=2)$\\ \hline
        Non-Gaussian & $X \sim N(0,1), \eta \sim \text{GMM}(k=3)$ \\
        \hline
    \end{tabular}
    \label{tab:gauss_set_table}
\end{table}

Table~\ref{tab:lin_set_table} summarizes the functional form used to generate \(Y\) from \(X\) under the three levels of linearity assumption violations considered: no violations (where $Y$ is a linear function of $X$), moderate violations (where $Y$ is a polynomial function of $X$ with a polynomial power of 1.5), and strong violations (where $Y$ is a polynomial function of $X$ with a polynomial power of 3). We ensure all other assumptions, besides linearity, are met. We sample $X$ from a truncated exponential distribution (Figure \ref{fig:gd}), which ensures that $Y$ is not normally distributed. Figure \ref{fig:lin} illustrates the simulated data under each linearity setting.

Table~\ref{tab:gauss_set_table} summarizes the noise distributions used to generate \(\eta\), corresponding to three levels of increasing Gaussianity, or equivalently, decreasing degrees of violation of the non-Gaussianity assumption, considered in our simulations: (1) non-Gaussian errors generated from a three-component Gaussian mixture model (GMM), (2) slight violations from a two-component GMM, and (3) Gaussian errors. To ensure a controlled progression toward Gaussianity, mixture parameters were chosen so that the three-component mixture produces a visibly multimodal, asymmetric distribution, while the two-component mixture produces a unimodal but still heavy-tailed and slightly skewed distribution. Specifically, for the three-component case, mixture means $(-2, 0, 2)$ and unequal variances $(0.5, 1, 3)$ with mixture weights $(0.4, 0.2, 0.4)$ yield a clearly non-Gaussian, multimodal shape. For the two-component case, mixture means $(-1, 0.25)$ and variances $(0.5, 0.6)$ with weights $(0.6, 0.4)$ generate a near-unimodal density with heavier tails and mild skewness relative to a standard Gaussian. Finally, under the Gaussian setting, the noise, $\eta$, is drawn from a single standard normal distribution. We ensure that all other assumptions, aside from non-Gaussianity, are satisfied by sampling $X$ from a Gaussian distribution (Figure~\ref{fig:gd}). When both $X$ and the noise $\eta$ are Gaussian, the resulting $Y$ is also Gaussian, placing the model in the linear–Gaussian setting where causal direction is not identified. Figure~\ref{fig:gerr} illustrates the corresponding error distributions under each non-Gaussianity setting.

\begin{figure}[htb]
     \centering
     \begin{subfigure}[b]{0.3\textwidth}
         \centering
         \includegraphics[width=\textwidth]{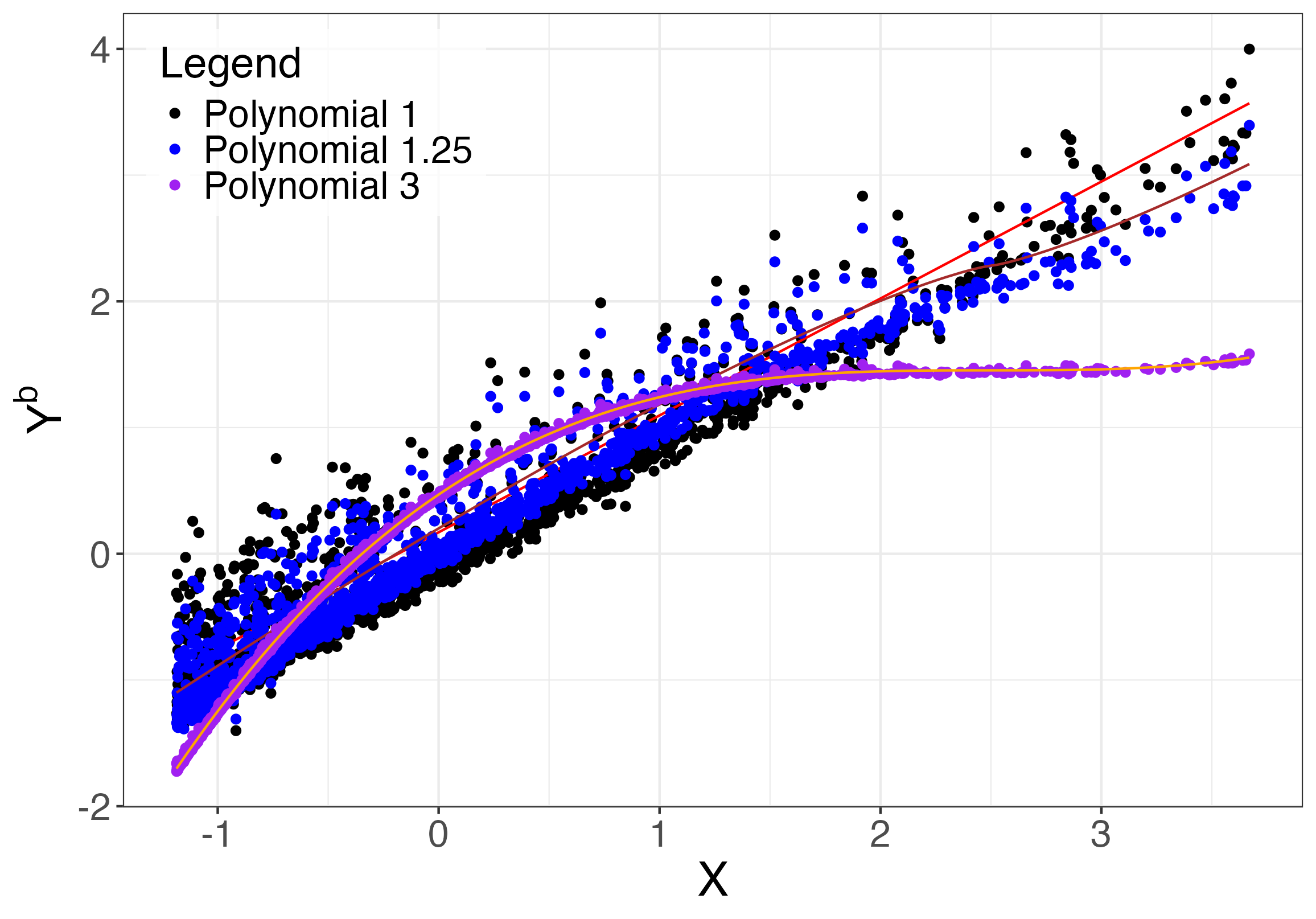}
        \caption{Settings of linearity}
         \label{fig:lin}
     \end{subfigure}
     \hfill
     \begin{subfigure}[b]{0.3\textwidth}
         \centering
         \includegraphics[width=\textwidth]{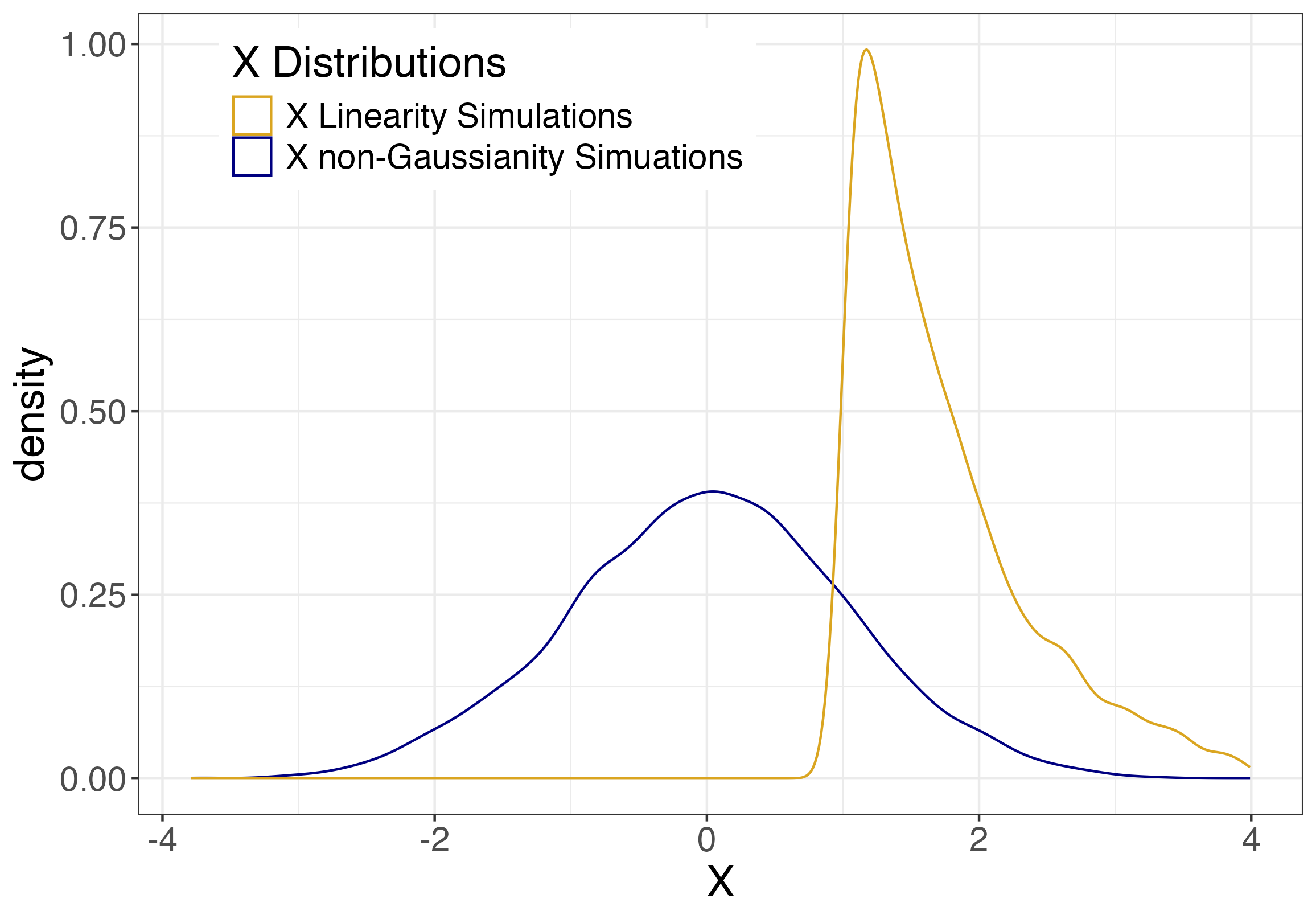}
         \caption{Simulation $X$ distribution}
         \label{fig:gd}
     \end{subfigure}
     \hfill
     \begin{subfigure}[b]{0.3\textwidth}
         \centering
         \includegraphics[width=\textwidth]{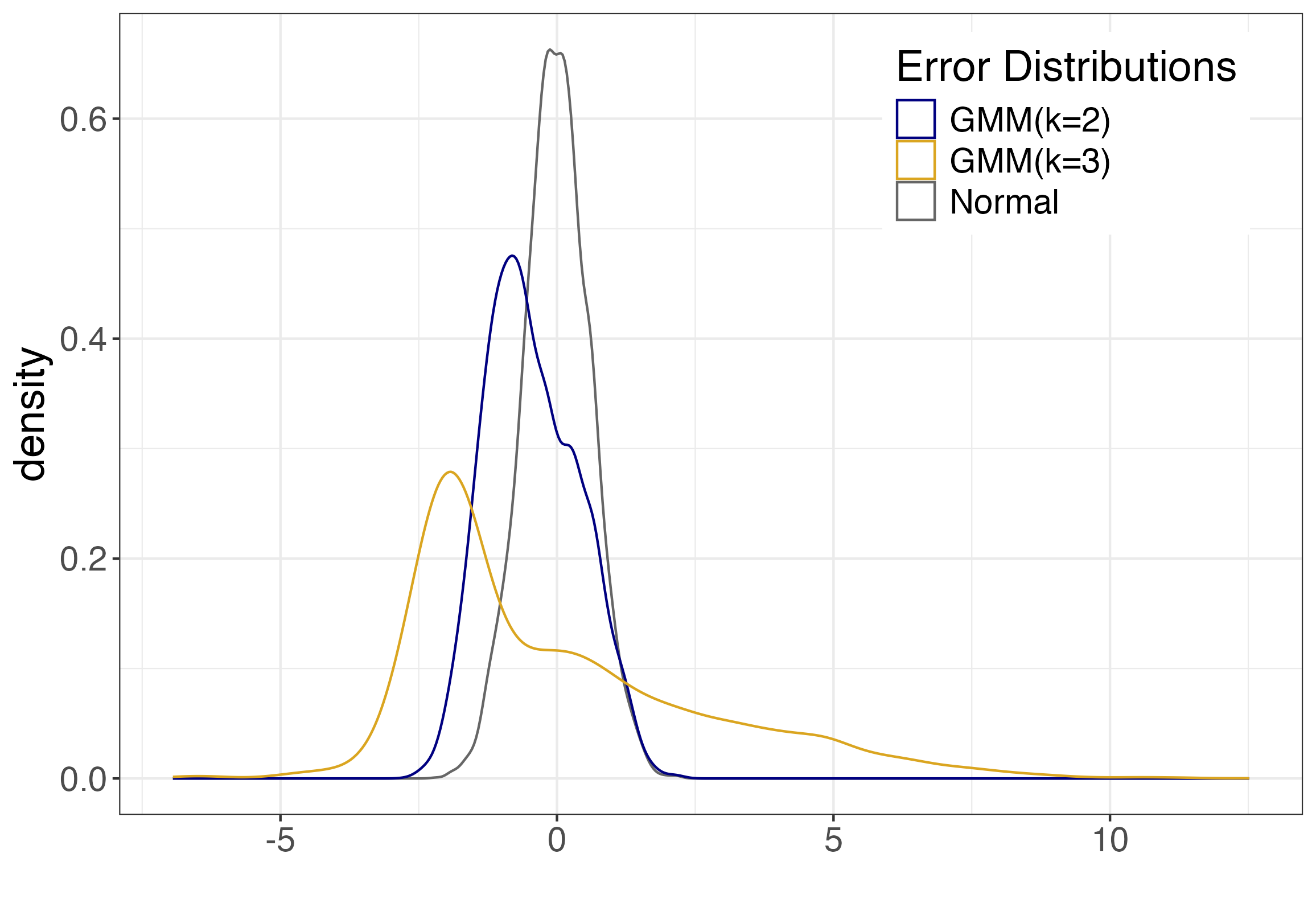}
         \caption{Error $(\eta)$ distributions}
         \label{fig:gerr}
     \end{subfigure}
        \caption{Simulation settings for linearity and Gaussianity.}
        \label{fig:three_graphs}
\end{figure}

\begin{table}[htb]
\scriptsize
\centering
\begin{tabular}{P{2.8cm} P{2.8cm} P{4cm} P{2.4cm} P{2.6cm}}
\toprule
& \multicolumn{2}{c}{Test-based approach} & \multicolumn{2}{c}{DirectLiNGAM} \\
\cmidrule(lr){2-3} \cmidrule(lr){4-5}
Degree of misspecification 
& Decision rate
& Uncertainty quantification 
& Decision rate
& Uncertainty quantification \\
\midrule
Linear (polynomial 1) 
& \makecell[l]{
\textbf{$\mathbf{X \to Y}$: 100\%}\\
$Y \to X$: 0\% \\
Inconclusive: 0\%}
& \makecell[l]{%
Reject both: 3.1\%\\
Fail to reject both: 0\%\\
Reject only $X\!\to\!Y$: 0\%\\
Reject only $Y\!\to\!X$: 96.9\%}
& \makecell[l]{
\textbf{$\mathbf{X \to Y}$: 100\%}\\
$Y \to X$: 0\%} 
& \makecell[l]{%
$X\!\to\!Y$: 99.6\%\\
$Y\!\to\!X$: 0.4\%} \\
\midrule
Moderately nonlinear (polynomial 1.5)
& \makecell[l]{
\textbf{$\mathbf{X \to Y}$: 0\%}\\
$Y \to X$: 0\% \\
Inconclusive: 100\%}
& \makecell[l]{%
Reject both: 100\%\\
Fail to reject both: 0\%\\
Reject only $X\!\to\!Y$: 0\%\\
Reject only $Y\!\to\!X$: 0\%}
& \makecell[l]{
\textbf{$\mathbf{X \to Y}$: 0\%}\\
$Y \to X$: 100\%} 
& \makecell[l]{%
$X\!\to\!Y$: 0\%\\
$Y\!\to\!X$: 100\%} \\
\midrule
Nonlinear (polynomial 3)
& \makecell[l]{
\textbf{$\mathbf{X \to Y}$: 0\%}\\
$Y \to X$: 0\% \\
Inconclusive: 100\%}
& \makecell[l]{%
Reject both: 100\%\\
Fail to reject both: 0\%\\
Reject only $X\!\to\!Y$: 0\%\\
Reject only $Y\!\to\!X$: 0\%}
& \makecell[l]{
\textbf{$\mathbf{X \to Y}$: 0\%}\\
$Y \to X$: 100\%} 
& \makecell[l]{%
$X\!\to\!Y$: 0\%\\
$Y\!\to\!X$: 100\%} \\
\bottomrule
\end{tabular}
\caption{
Comparison between the test-based approach and DirectLiNGAM under varying degrees of linearity assumption violation. Rows correspond to no violation (linear), moderate violation (polynomial exponent $1.5$), and strong violation (polynomial exponent $3$). For the test-based approach, we report the decision and average outcome rates across $M$ simulated datasets. For DirectLiNGAM, we report the decision and average bootstrap rates.
}
\label{tab:lin_sim_res}
\end{table}

Tables~\ref{tab:lin_sim_res} and~\ref{tab:gauss_sim_res} summarize the results for the test-based approach and DirectLiNGAM under each level of linearity and non-Gaussianity assumption violation, respectively. For the test-based approach, we report decision rates across the $M$ replications, along with the average outcome rates aggregated over the $M$ simulated datasets. For DirectLiNGAM, we likewise report decision rates across the $M$ replications and the average bootstrap resampling rates aggregated over the $M$ simulated datasets.

Simulation results under varying degrees of linearity violation show that the test-based approach appropriately signals assumption violations, frequently yielding inconclusive outcomes, whereas DirectLiNGAM tends to select an incorrect causal direction when linearity assumptions are violated. When there are no assumption violations, both DirectLiNGAM and the test-based approach correctly identify the causal direction \(X \to Y\) across all replications. For DirectLiNGAM, the average bootstrap rate for the correct direction is 99.6\%. For the test-based approach, the average outcome rate corresponding to the correct direction is quite high (96.9\%), with all other average causal outcome rates being quite small. Under moderate and strong model misspecification, DirectLiNGAM incorrectly favors the reverse direction \(Y \to X\), with average bootstrap rates of 100\% in both settings. In contrast, across all replications, the test-based approach yields an inconclusive result, as both hypothesis tests are rejected---indicating assumption violations due to model misspecification. The average causal outcome rates further support this conclusion, with the average rate of rejecting both directions equal to 100\% and all other average rates equal to 0\%. Together, the moderate and strong model misspecification settings illustrate how the inferential framework provided by the test-based approach enables a more informative interpretation of uncertainty. Specifically, the test-based approach correctly signals that no causal direction can be reliably inferred under assumption violations and therefore yields an inconclusive result. By contrast, while DirectLiNGAM incorporates a bootstrap-based inferential procedure, it fails to indicate the presence of assumption violations and instead favors an incorrect causal direction due to its enforced deterministic decision rule.

\begin{table}[htb]
\scriptsize
\centering
\begin{tabular}{P{2.8cm} P{2.8cm} P{4cm} P{2.4cm} P{2.6cm}}
\toprule
& \multicolumn{2}{c}{Test-based approach} & \multicolumn{2}{c}{DirectLiNGAM} \\
\cmidrule(lr){2-3} \cmidrule(lr){4-5}
Degree of Gaussianity
& Decision rate
& Uncertainty quantification
& Decision rate
& Uncertainty quantification \\
\midrule
$GMM(k=3)$ 
& \makecell[l]{
\textbf{$\mathbf{X \to Y}$: 100\%}\\
$Y \to X$: 0\% \\
Inconclusive: 0\%}
& \makecell[l]{%
Reject both: 2.7\%\\
Fail to reject both: 0\%\\
Reject only $X\!\to\!Y$: 0\%\\
Reject only $Y\!\to\!X$: 97.3\%}
& \makecell[l]{
\textbf{$\mathbf{X \to Y}$: 100\%}\\
$Y \to X$: 0\%} 
& \makecell[l]{%
$X\!\to\!Y$: 99.1\%\\
$Y\!\to\!X$: 0.9\%} \\
\midrule
$GMM(k=2)$
& \makecell[l]{
\textbf{$\mathbf{X \to Y}$: 70\%}\\
$Y \to X$: 0\% \\
Inconclusive: 30\%}
& \makecell[l]{%
Reject both: 5.2\%\\
Fail to reject both: 24.9\%\\
Reject only $X\!\to\!Y$: 1.1\%\\
Reject only $Y\!\to\!X$: 68.8\%}
& \makecell[l]{
\textbf{$\mathbf{X \to Y}$: 96\%}\\
$Y \to X$: 4\%} 
& \makecell[l]{%
$X\!\to\!Y$: 95.2\%\\
$Y\!\to\!X$: 4.8\%} \\
\midrule
Gaussian
& \makecell[l]{
\textbf{$\mathbf{X \to Y}$: 0\%}\\
$Y \to X$: 0\% \\
Inconclusive: 100\%}
& \makecell[l]{%
Reject both: 1.2\%\\
Fail to reject both: 97.6\%\\
Reject only $X\!\to\!Y$: 0.4\%\\
Reject only $Y\!\to\!X$: 0.8\%}
& \makecell[l]{
\textbf{$\mathbf{X \to Y}$: 45\%}\\
$Y \to X$: 55\%} 
& \makecell[l]{%
$X\!\to\!Y$: 42.9\%\\
$Y\!\to\!X$: 57.1\%} \\
\bottomrule
\end{tabular}
\caption{Comparison between the test-based approach and DirectLiNGAM under varying degrees of non-Gaussianity assumption violation. Rows correspond to no violation (errors from a $GMM(k=3)$), slight violation (errors from a $GMM(k=2)$), and strong violation (Gaussian data and errors i.e. direction is unidentified). For the test-based approach, we report the decision and average outcome rates across $M$ simulated datasets. For DirectLiNGAM, we likewise report the decision and average bootstrap-based stability rates across $M$ simulated datasets.}
\label{tab:gauss_sim_res}
\end{table}

Similarly to the (non-)linearity settings, simulation results under varying degrees of non-Gaussianity assumption violation also show that the test-based approach appropriately signals when the causal direction is unidentified, yielding inconclusive outcomes, whereas DirectLiNGAM tends to oscillate between the two directions when the non-Gaussianity assumption is violated. When there are no or only slight violations of the non-Gaussianity assumption and all other assumptions are satisfied, both DirectLiNGAM and the test-based approach correctly identify the causal direction \(X \to Y\) for the majority of replications (100\% for both under no violations, and 96\% for DirectLiNGAM and 70\% for the test-based approach under slight violations). For DirectLiNGAM, the average bootstrap rates for the correct direction are 99.1\% and 95.2\% under no and slight violations, respectively. Under no assumption violations, the test-based approach yields an average outcome rate of 97.3\% for the correct direction, with all other average causal outcome rates being quite small. Under slight violations, the average rate of correctly identifying the causal direction decreases to 70\%, while the average rate of failing to reject both hypotheses increases to 24.9\%, indicating mild departures from the non-Gaussianity assumption. Under severe assumption violations, when the data are Gaussian, DirectLiNGAM favors the incorrect direction \(Y \to X\) in 55\% of replications, with an associated average bootstrap rate of 57.1\%. Thus, a single deterministic output from DirectLiNGAM is close to a coin flip and may favor the incorrect direction. However, the bootstrap results suggest a more nuanced interpretation: DirectLiNGAM does not strongly support the incorrect direction, but rather yields relatively high bootstrap probabilities for both directions, indicating substantial uncertainty. In this setting, where identifiability breaks down due to Gaussianity, the bootstrap rates can therefore be interpreted as providing a limited form of uncertainty quantification, suggesting an effectively inconclusive result. In contrast, across all replications, the test-based approach yields an inconclusive result, as both hypothesis tests fail to be rejected---indicating assumption violations due to Gaussianity. The average causal outcome rates further support this interpretation, with the rate of failing to reject both hypotheses equal to 97.6\% and all other average rates comparatively small. Together, these settings illustrate how the inferential framework provided by the test-based approach enables a more direct interpretation of uncertainty when identifiability breaks down. Specifically, the test-based approach correctly signals that no causal direction can be reliably inferred under violations of the non-Gaussianity assumption and therefore yields an inconclusive result. By contrast, while the bootstrap in DirectLiNGAM provides a limited form of uncertainty quantification, DirectLiNGAM still enforces a deterministic decision and does not allow for inconclusive outcomes; this behavior may still lead a practitioner to favor the incorrect direction. 

\subsubsection{PNL Model Class Simulations}

When paired with PNLs, we use the hypothesis tests in~\eqref{eq:h_indep}, with $f_1$ and $f_2$ restricted to a specified nonlinear family. 

We use our simulations to study the behavior of the test-based approach paired with PNLs under varying degrees of model misspecification (violation of Assumption A1). We refrain from considering unidentifiable regimes (Assumption A5), since the linear–Gaussian case—the simplest unidentifiable regime—has already been studied for the LiNGAM model class, where we show that the test-based approach correctly signals non-identifiability. Extending these experiments to PNL models would not provide additional insight.

For each model misspecification setting, we generate \(M = 100\) independent datasets, each of size \(N = 3000\), from the same data-generating process under the true causal direction \(X \rightarrow Y\), given by
\[
    Y = f_2\big(f_1(X) + \eta\big),
\]
which corresponds to a post-nonlinear (PNL) causal model. In our simulations, we specify logarithmic \(f_1(x) = \log(x)\) and hyperbolic tangent \(f_2(u) = \tanh(u)\) transformations. 
%\(f_1(x) = \log(x)\) as a logarithmic transformation and \(f_2(u) = \tanh(u)\) as a hyperbolic tangent transformation. 
We use the same marginal distribution for \(X\) and the same error distributions \(\eta\) as in the linearity simulations for LiNGAM model class (see Figures~\ref{fig:gd} and~\ref{fig:gerr}).

To assess the effect of model misspecification, we consider four pairs of fitted models for \(f_1\) and \(f_2\): (1) both functions are correctly specified; (2) \(f_1\) is misspecified as linear while \(f_2\) is correctly specified; (3) \(f_2\) is misspecified as linear while \(f_1\) is correctly specified; and (4) both functions are misspecified as linear. These scenarios allow us to isolate the effects of misspecifying either component of the PNL model and of simultaneously misspecifying both components.

% \(f_1\) and \(f_2\), corresponding to correct specification, partial misspecification of \(f_1\) or \(f_2\), and full misspecification. First, in the correctly specified PNL scenario, the fitted models match the true forms of \(f_1\) and \(f_2\). Second, in the two partially misspecified scenario, we first fit a linear model for \(f_1\) while correctly specifying \(f_2\) as well as a linear model for \(f_2\) while correctly specifying \(f_1\). Third, in a fully misspecified scenario, we fit linear models for both \(f_1\) and \(f_2\). These scenarios allow us to isolate the impact of progressively stronger deviations from the true PNL model on the behavior of the test-based approach.

We find that, across these misspecification settings, the test-based approach behaves analogously to the LiNGAM linear misspecification experiments. When the PNL model is correctly specified, the test-based approach recovers the true causal direction \(X \to Y\) in 100\% of the $M$ replications. Under partial and full misspecification, it yields inconclusive outcomes in 100\% of the $M$ replications, indicating that no reliable causal conclusion can be drawn under these misspecified models. Thus, the PNL simulations corroborate the conclusions from the LiNGAM misspecification experiments: when the model is correctly specified, the test-based approach recovers the correct direction, whereas under misspecification it appropriately signals a lack of reliable evidence through inconclusive outcomes. A table summarizing these results as well as some further discussion of misspecification of $f_2$ is provided in the Appendix.

In summary, our simulation studies confirm that the test-based approach provides a principled and informative inferential framework for functional causal discovery. Across model classes, the simulations show that the test-based approach correctly identifies the causal direction when model assumptions are satisfied. Through hypothesis testing and causal outcome rates, the framework quantifies uncertainty in the inferred direction and yields inconclusive outcomes when the causal direction is unidentified---corresponding to failure to reject both directions---or when model misspecification is present, as indicated by rejection of both directions.

By contrast, the commonly used DirectLiNGAM with bootstrap provides limited inferential information. While DirectLiNGAM with bootstrap summarizes the stability of a single directional decision across resamples, it does not permit inconclusive outcomes or provide insight about potential assumption violations. Under moderate and severe linear model misspecification, DirectLiNGAM favors the incorrect direction, and in the linear–Gaussian setting the causal direction is unidentified, leading DirectLiNGAM to oscillate between the two directions at near 50\% rates. The test-based approach, on the other hand, permits inconclusive outcomes and provides insight into potential assumption violations. In the LiNGAM model class simulations, it is able to signal potential violations of two key assumptions---linearity (Assumption A1) and non-Gaussianity (Assumption A5). The PNL model class simulations further demonstrate that this behavior is not specific to LiNGAM: when PNL assumptions hold, the test-based approach favors the correct causal direction, and when model misspecification is present, it shifts toward inconclusive outcomes rather than enforcing a potentially misleading decision. 

Our simulation results motivate the analysis of real-world datasets, where the data-generating mechanism and the validity of modeling assumptions are typically unknown.

\subsection{Real Data}\label{sec-realdata}

% use pair 0033, near linear and literature on this (will do 0098 instead and not standardize the data)

% use pair 0050 ozone and temperature
We apply the test-based approach to three real-world datasets from the Cause–Effect Pairs Benchmark (Tübingen pairs)~\citep{mooij_distinguishing_2016}: two Balltrack datasets and the Ozone and Temperature dataset. The Tübingen pairs, where the direction of causality is presumed known, are widely used to study causal discovery methods. 
% and have also been used to assess the goodness of fit of linear non-Gaussian structural equation models~\citep{schkoda2025goodness}. 
We selected these three examples because their scatterplots suggested approximately linear relationships. Therefore,
%a regime in which linear causal discovery methods are often presumed appropriate in practice. As a result, 
one might reasonably expect approaches based on the LiNGAM model class to be well suited for these data, making them informative test cases for illustration. 

Through this analysis, we demonstrate how the test-based approach paired with LiNGAM provides a richer inferential framework than DirectLiNGAM. Specifically, separate hypothesis tests for each causal direction allow
%the test-based approach yields separate hypothesis test outcomes for each causal direction, allowing 
practitioners to distinguish between confident directional conclusions and inconclusive cases due to possible assumption violations---insights that are not available from DirectLiNGAM, even when supplemented with bootstrap rates~\citep{thamvitayakul2012}. 
%as is commonly done in applications (see Table~\ref{tab:fcd-inference-summary}).

\begin{figure}[!ht]
    \centering
    \includegraphics[width=160mm]{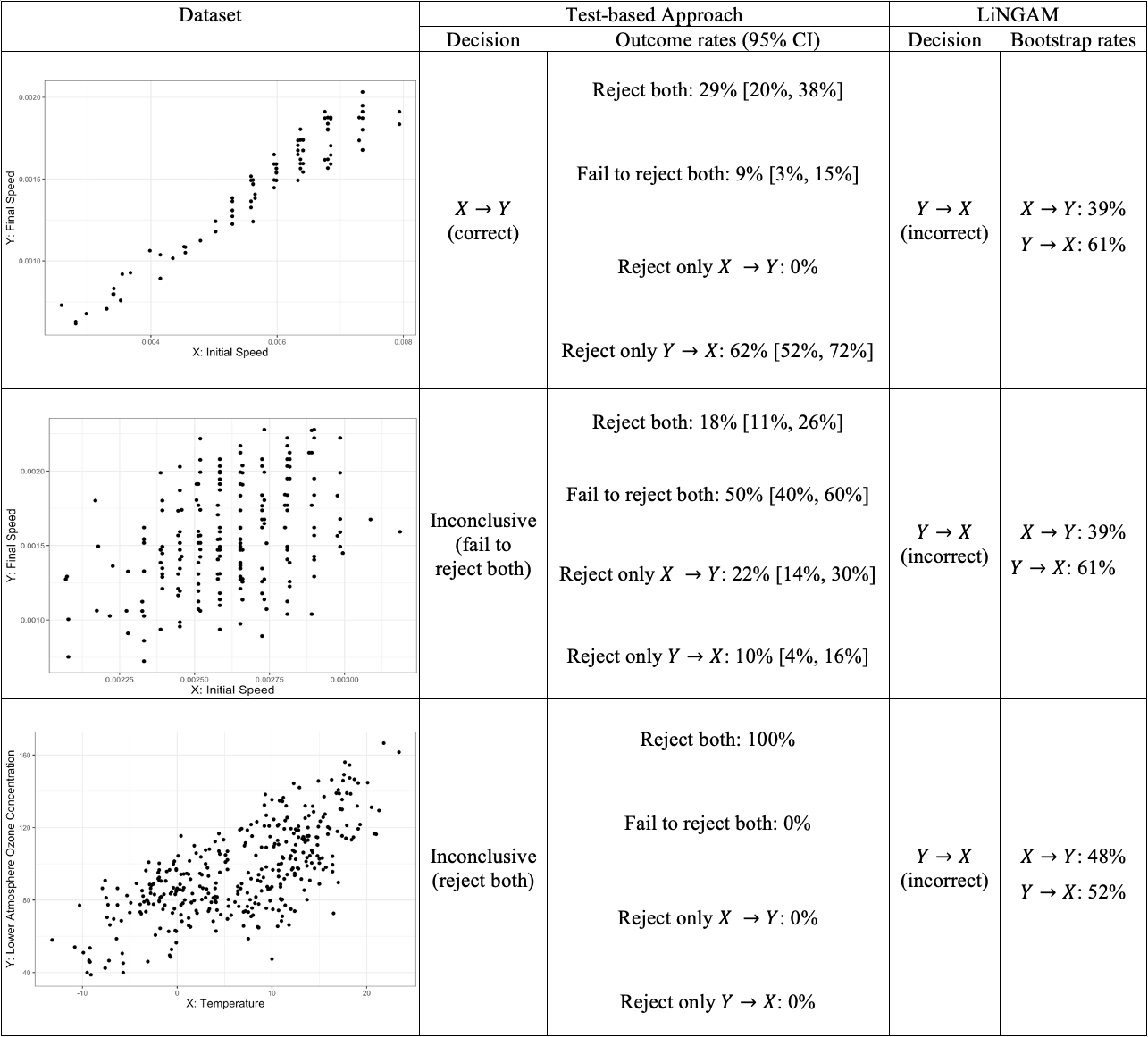}
    \caption{The first two rows presents results for the Balltrack datasets, and the third row presents results for the Temperature and Ozone dataset. From left to right, the columns display the scatterplot of the data, the test-based decision and outcome rates (resampling proportions with 95\% confidence intervals; CIs are omitted when the estimated variance is zero), and the corresponding LiNGAM decision with associated bootstrap rates.}
    \label{fig:real_res}
\end{figure} 

The first Balltrack dataset contains 94 measurements collected using a physical ball track equipped with two pairs of light barrier sensors: the first sensor ($X$) records the ball’s initial speed, and the second ($Y$) records its speed at a later position along the track. The experiment was designed and conducted for the Tübingen Cause–Effect Pairs benchmark suite~\citep{mooij_distinguishing_2016}. The initial portion of the track has a steep slope, making the initial speed highly sensitive to the exact release position. To introduce natural variability and avoid systematic positioning, some runs used positions chosen by an adult, while the remainder used positions chosen by a four-year-old child. Following \citet{mooij_distinguishing_2016}, we treat the causal direction as known, with the initial speed causing the final speed. A scatterplot of initial versus final speed (Figure~\ref{fig:real_res}) suggests an approximately linear relationship, consistent with the diagnostic reported by \citet{schkoda2025goodness}. Nevertheless, the DirectLiNGAM point estimate infers the incorrect causal direction when applied to this dataset, and the associated bootstrap procedure supports this incorrect direction in 61\% of resamples. In contrast, the test-based approach supports the correct causal direction. Specifically, the causal outcome rate for the correct direction is 62\%, while the remaining outcome rates indicate moderate assumption violations, potentially due to model misspecification, as evidenced by a 29\% rate of rejecting both directions. This example illustrates that relying on the DirectLiNGAM point estimate alone is not advisable and that, even when supplemented with bootstrap rates, it can still be misleading. By contrast, the test-based approach not only recovers the correct causal direction but also provides diagnostic information about potential assumption violations.
%, offering a more informative inferential summary. 

The second Balltrack dataset contains 202 measurements collected under a setup similar to the first Balltrack dataset with two sensors $X$ and $Y$ recording the initial and a later speed along the track. 
%collected using a physical ball track equipped with two pairs of light barrier sensors: the first ($X$) records the ball’s initial speed, and the second ($Y$) records its speed at a later position along the track. 
However, in this setup, the track had a shorter acceleration zone, which limited the extent to which the ball can increase its speed. As with the first Balltrack dataset, the experiment was designed and conducted for the Tübingen Cause–Effect Pairs benchmark suite \citep{mooij_distinguishing_2016}, and we
%Following \citet{mooij_distinguishing_2016}, we 
treat the causal direction as known, with the initial speed causing the final speed. 
A scatterplot of initial versus final speed (Figure~\ref{fig:real_res}) suggests an approximately linear relationship, consistent with the diagnostic reported by \citet{schkoda2025goodness}. Here, the DirectLiNGAM point estimate infers the incorrect causal direction when applied to this dataset, and the associated bootstrap procedure supports this incorrect direction in 61\% of resamples. By contrast, the test-based approach yields an inconclusive result by failing to reject either causal direction, with a causal outcome rate of 50\%. This indicates potential identifiability or small sample size issues, an informative finding which is preferable to LiNGAM’s enforced deterministic (and incorrect) decision. % Since non-identifiability due to exact Gaussianity can be rare in real data, this outcome more likely reflects limited sample size, suggesting that this dataset may benefit from additional observations. This outcome is preferable to LiNGAM’s enforced deterministic (and incorrect) decision, as it not only avoids a misleading conclusion but also signals the potential need for more data.

The Ozone and Temperature dataset contains 365 daily mean temperatures and lower-atmosphere ozone concentrations collected during 2009 in Chaumont, Switzerland, by the Swiss Federal Office for the Environment~\cite{adminAirData}. Following~\cite{mooij_distinguishing_2016}, we assume changes in temperature cause changes in lower-atmosphere ozone concentrations. The relationship between the two variables appears approximately linear (see the scatterplot in Figure~\ref{fig:real_res}).
%, consistent with the results of \citet{schkoda2025goodness}, who find that a linear model becomes tenable after allowing for one latent confounder, consistent with the results of \citet{schkoda2025goodness}, who find that a linear model becomes tenable after allowing for one latent confounder. 
Applying DirectLiNGAM to this dataset yields the incorrect causal direction, with a bootstrap rate of 52\% (Figure~\ref{fig:real_res}). Because the two directions receive nearly equal bootstrap support, this rate provides a limited form of uncertainty quantification and suggests that the LiNGAM's result is effectively inconclusive. The test-based approach yields an unequivocally inconclusive outcome by rejecting both causal directions, indicating a potential violation of the modeling assumptions. Although our procedure does not identify the violated assumption, diagnostics of \citet{schkoda2025goodness}---who found that a linear model becomes tenable for the Ozone and Temperature dataset after allowing for one latent confounder---suggest that unobserved confounding is one possible explanation. The test-based approach's inconclusive outcome is preferable to the DirectLiNGAM's result because it is not only more explicit but also provides diagnostic information about a possible violation of the modeling assumptions.

Overall, the real-data analyses further illustrate how the test-based approach provides a richer inferential framework than DirectLiNGAM with bootstrap. In particular, it produces directional conclusions when supported by the data and inconclusive outcomes otherwise; in addition, it provides outcome rates and confidence intervals that quantify uncertainty across resamples. This distinction---between enforcing a single deterministic direction and allowing uncertainty-aware, inconclusive outcomes---is particularly important in applied settings, where the data-generating mechanisms and the extent of assumption violations are typically unknown. In such scenarios, inconclusive outcomes are especially informative: rejecting both causal directions can signal potential model misspecification, while failing to reject either direction may indicate lack of identifiability or insufficient sample size under the assumed model class. By permitting and interpreting these distinct inconclusive outcomes, the test-based approach provides diagnostic insight that is not explicitly available from LiNGAM-based causal discovery methods.

% \begin{itemize}
%     \item Fully linear data: show that tests recover the correct direction under model assumptions.
%     \item Slight violations of linearity
%     \item Non-identifiable Gaussian data
%     \item Use small, clear tables/figures to emphasize interpretation over computation.
% \end{itemize}

% \section{Practical Lessons for Statisticians (draft by EOD 9/29 to discuss at 10/6 and 10/20 meetings)}\label{sec-practical-lessons}
% % \begin{itemize}
% %     \item practical lessons for statisticians 
% % \end{itemize}

\section{Discussion}\label{sec-conc}

Causal discovery methods have pioneered data-driven approaches for identifying and understanding causal relationships, a goal that is fundamental across many scientific disciplines. However, applied researchers using causal discovery methods should be mindful of the (lack of) statistical guarantees in existing approaches such as LiNGAM, ANM, or PNL. 

In this paper, we reviewed the state of statistical guarantees provided by existing functional causal discovery methods. As summarized in Table~\ref{tab:fcd-inference-summary}, this review revealed a clear gap: the absence of a formal, general inferential framework that integrates multiple complementary forms of inference while making both assumptions and uncertainty explicit. As a first step toward filling this gap, we developed a test-based approach to bivariate functional causal discovery that combines two complementary forms of inference---hypothesis testing and uncertainty quantification for each competing causal discovery outcome---within a formal framework, with theoretical properties of the resulting outcome rates established in Theorem~\ref{thm:bootstrap_outcome_rate}. Together, these components provide a principled way to quantify uncertainty in the estimated causal direction and provide diagnostic insight for cases when a determinate causal direction is not supported by the data.

Specifically, the inferential framework of the test-based approach enables practitioners to (i) report evidence supporting a causal direction when such conclusion is supported by the data, (ii) quantify uncertainty in that conclusion, and (iii) obtain diagnostic insight into potential assumption violations---such as model misspecification (Assumption~A1) and non-identifiability (Assumption~A5)---particularly when the causal direction is inconclusive. Importantly, these forms of inference are available across major functional causal model classes. As a result, the test-based approach to functional causal discovery addresses a gap in the existing literature, where this combination of qualities---uncertainty quantification, diagnostic insight, and applicability across model classes---is largely unavailable (see Table~\ref{tab:fcd-inference-summary}).

The generality of our framework derives from its ability to be paired with any suitable procedure for jointly assessing goodness-of-fit and independence between the predictor and the error. The practical guarantees of a particular implementation, however, depend on the theoretical properties of the test used. To our knowledge, no goodness-of-fit and independence test has been developed specifically for the PNL model class. We therefore use the test of \citet{sen_testing_2014}, which was originally developed for additive noise models. Our simulations provide empirical evidence that under the PNL misspecification settings considered, the test rejects both causal directions due to model misspecification, including when the inner function, the outer function, or both are misspecified. Nevertheless, these results do not constitute a general theoretical guarantee for PNL models. Developing tests specifically for the PNL setting, together with corresponding theoretical guarantees, is therefore an important direction for future methodological research.

The lesson for applied statisticians is threefold. First, causal discovery should always be undertaken with explicit awareness of its modeling assumptions and of the consequences when those assumptions fail. Second, inconclusive results or signs of assumption violation should not be viewed as failures but as statistically meaningful signals that can guide model refinement, data collection, or reformulation of causal hypotheses. Third, causal discovery should be paired with explicit uncertainty quantification, and users should understand the nature and scope of the inferential information provided by these tools. The test-based approach represents a first step toward incorporating these key principles into a coherent inferential framework.

Within the test-based approach, we primarily discussed the impact of model misspecification and non-identifiability (e.g., the linear–Gaussian setting). An important direction for future work is to extend this framework to characterize how other types of assumption violations—such as unobserved confounding, cyclicity, and non-i.i.d.\ data—affect the behavior of the two hypothesis tests in~\eqref{eq:h_arrow} and the resulting causal discovery outcomes. Understanding how these violations, individually or in combination, influence test outcomes would further enhance the diagnostic scope and practical utility of the test-based approach.

% While we studied the test-based approach in the bivariate setting in this paper, extending the framework to multivariate settings is another important direction for future work. One very simple multivariate setting in which the framework can directly apply is when the causal structure is known for all variables except for two variables of interest, $X$ and $Y$, and the effects of all other variables on $X$ and $Y$ are understood. In this case, the framework can be used to study the directionality between $X$ and $Y$ with only minor modifications. Specifically, one could first remove the effects of the known confounders on $X$ and $Y$, respectively, via, e.g., linear regression. The resulting residuals $r_X$ and $r_Y$ can then be treated as a new pair of bivariate variables to be directly used in the test-based approach. 
While we studied the test-based approach in the bivariate setting in this paper, extending the framework to multivariate settings is another important direction for future work. One relatively simple multivariate setting arises when the causal structure is known except for the direction between two variables of interest, $X$ and $Y$, and the effects of the other relevant variables on $X$ and $Y$ can be modeled. One might first remove these effects from $X$ and $Y$, following the residualization strategy used in DirectLiNGAM~\citep{shimizu2011directlingam}, and then apply the bivariate test-based approach to the resulting residuals $r_X$ and $r_Y$. However, because $r_X$ and $r_Y$ are estimated from the same data, they cannot simply be treated as observed i.i.d. pairs. Although the test of \citet{sen_testing_2014} accounts for the estimation of the residuals from the regression being tested, this preliminary residualization introduces an additional source of estimation uncertainty. A formal extension would therefore need to propagate this uncertainty through the subsequent goodness-of-fit and independence tests and establish the validity of the resulting procedure.

Extending the test-based approach to a general multivariate setting in which the causal structure among variables is unknown requires substantial methodological development. Applying the bivariate procedure separately to each pair need not produce a globally consistent causal graph. For example, with three variables, the pairwise analyses could yield $X_1 \to X_2$, $X_2 \to X_3$, and $X_3 \to X_1$, which is incompatible with a directed acyclic graph. They may also yield a mixture of directional and inconclusive results. A multivariate extension would therefore require a principled method for aggregating the dependent pairwise tests while controlling error and ensuring global graph compatibility. One possible way to ensure global compatibility would be to construct a sequential procedure analogous to DirectLiNGAM that identifies upstream variables and thereby produces a causal ordering. However, adapting this strategy to the test-based framework is not straightforward. At each stage, identifying an upstream variable would require multiple goodness-of-fit and residual-independence tests. These tests could identify multiple candidates, no candidate, or only inconclusive results, making it unclear which variable should be selected next or whether the procedure should continue. Moreover, because the regressions and tests at later stages depend on earlier ordering decisions, errors or inconclusive outcomes could propagate through the remaining stages. The sequence of tests, the rules for resolving conflicting or inconclusive outcomes, and the resulting error guarantees would therefore all need to be developed. These challenges become more pronounced as the number of variables increases. With $p$ variables, each variable has $2^{p-1}$ possible parent sets before structural restrictions are imposed, and the associated tests are dependent because they use the same data and related regression models. Accounting for additional variables as potential confounders further requires estimating and removing multiple nuisance components. Developing a scalable procedure that produces globally coherent causal structures while providing valid uncertainty quantification therefore remains an important direction for future research.

Finally, the inferential foundations established by the test-based functional causal discovery approach can be extended using ideas from the broader statistical inference literature---such as methods for power analysis \citep{lehmann2005testing,casella2002statistical}, resampling \citep{EfroTibs93,good2005permutation}, and model diagnostics \citep{cook1977detection,atkinson1985plots,buja2009statistical}---to
%---may prove valuable when adapted to the causal discovery setting. 
further strengthen causal discovery methodology and promote its reliable use in practice.

\section{Disclosure statement}\label{disclosure-statement}

The authors have no conflicts of interest to declare.

\section{Acknowledgements}\label{acknowledge}
The authors thank participants of the University of Washington's Statistical and Machine Learning Approaches for the Social Sciences working group for their feedback and support. Shreya Prakash received partial support from the University of Washington's Center for Statistics and the Social Sciences (CSSS) and the Department of Statistics. The authors used OpenAI’s ChatGPT (GPT-5.1 and GPT-5.2) for localized, minor language improvements in some paragraphs in the process of drafting the paper. The tool was not used to generate original research content, data analysis, to assist with literature review or scientific conclusions. All intellectual and scholarly contributions in this manuscript were made by the authors.

\section{Data Availability Statement}\label{data-availability-statement}

The data that support the findings of this study are openly available from the Tübingen Cause–Effect Pairs repository hosted by the Max Planck Institute for Intelligent Systems at \url{https://webdav.tuebingen.mpg.de/cause-effect/}. The datasets are described in detail in \cite{mooij_distinguishing_2016} and are cited in this paper in accordance with the repository’s citation guidelines.

\bibliographystyle{plainnat}
\bibliography{bibliography}

\clearpage
\appendix

\section*{Appendix}

\section{Unidentifiable Cases in Post-Nonlinear Models}
\label{sup:unidentifiable_cases}

Following \citet{zhang2012identifiability}, the causal direction in the Post-Nonlinear (PNL) model is unidentifiable in the following five cases:

\begin{enumerate}
  \item Both $f_1$ and $f_2$ are linear and the noise $\eta$ is Gaussian.
  \item The model reduces to an invertible linear transformation with symmetric distributions for $X$ and $\eta$.
  \item The noise $\eta$ and the cause $X$ have certain exponential family relationships that render $f_1$ and $f_2$ indistinguishable.
  \item The functional relationships satisfy specific differential equation constraints that make both directions equally valid.
  \item Degenerate cases where $f_2$ is constant or non-invertible, leading to lack of model identifiability.
\end{enumerate}

\section{Example Behaviors of
  \texorpdfstring{$\xi$}{xi}}
\label{sup:errors}

To understand the behavior of $\xi$ under the ANM model class ($f_2, g_2$ are identity functions), we examine several illustrative cases.

\begin{enumerate}
    \item Suppose $f_1$ is additive, i.e. $f_1(a+b) = f_1(a)+f_1(b)$, and $g_1 = f_1^{-1}$: $$\xi = f_1^{-1}(Y)-f_1^{-1}(\eta)-f_1^{-1}(Y) = f_1^{-1}(\eta),$$ here $\xi$ is only a function of $\eta$.
    
    \item Suppose \(\eta = 0\), and \(g_1 = f_1^{-1}\). Then
    \[
    \xi = f_1^{-1}(Y) - f_1^{-1}(Y) = 0,
    \]
    so the reverse-direction residual vanishes.
    
    \item Suppose \(f_1\) is nonlinear, and \(g_1\) is restricted to be linear, \(g_1(Y) = \gamma Y\) for some \(\gamma \in \mathbb{R}\). Then
    \[
    \xi = f_1^{-1}(Y - \eta) - \gamma Y.
    \]
    In this case, \(\xi\) depends on both \(Y\) and \(\eta\), and hence \(Y \not\perp \xi\).
    % \item When $f_2, g_2$ are identity functions, $\eta = 0$ and $g_1 = f_1^{-1}$: $$\xi = 0$$
    % \item When $f_2, g_2$ are identity functions, nonlinear $f_1$ but linear $g_1(Y) = \gamma Y, \gamma 
    % \in \mathbb{R}$:
    % $$\xi = f_1^{-1}(Y-\eta)-g_1(Y) = f_1^{-1}(Y-\eta)-(\gamma Y),$$ here $\xi$ contains both the residual term of $Y$ in addition to $\eta$. Thus $Y \not\perp \xi$.
    \end{enumerate}

Taken together, these cases illustrate that under the ANM model class, when $f_1$ is nonlinear, there is no choice of $g_1$ within a restricted model class that yields $\xi \perp Y$.. Consequently, except in the standard unidentifiable cases (e.g., the linear–Gaussian setting), we have $Y \not\perp \xi$ when fitting the model in the incorrect causal direction.

\section{PNL Model Misspecification Simulation Results}

We examine the test-based approach under four PNL model specifications: (1) both $f_1$ and $f_2$ are correctly specified; (2) $f_1$ is misspecified as linear while $f_2$ is correctly specified; (3) $f_2$ is misspecified as linear while $f_1$ is correctly specified; and (4) both functions are misspecified as linear.

The test of \citet{sen_testing_2014} is formulated for an additive model. When the outer transformation is correctly specified and treated as fixed, the PNL model
\[
Y=f_2(f_1(X)+\eta),
\qquad X\perp\eta,
\]
can be transformed into
\[
f_2^{-1}(Y)=f_1(X)+\eta.
\]
Provided that $f_1$ belongs to the regression-function class considered by \citet{sen_testing_2014} and their remaining regularity conditions hold, the transformed model satisfies their null hypothesis, and their bootstrap-consistency result provides asymptotic Type I error control.

This argument does not directly apply when $f_2$ is misspecified. Applying an incorrectly specified inverse transformation may result in a misspecified regression function, dependence between the predictor and error, or both. Such departures correspond to alternatives considered by \citet{sen_testing_2014}, for which their test is asymptotically consistent under their regularity conditions. However, their theory does not characterize when misspecification of the PNL outer transformation necessarily induces one of these departures from the additive-model null. A general theoretical analysis of misspecified outer transformations is beyond the scope of this paper. We therefore investigate this setting empirically by fitting $f_2$ as linear while correctly specifying $f_1$.

Table \ref{tab:pnl_sim_res} presents the simulation results for the test-based approach paired with the PNL model class under the four model specifications. 

\begin{table}[ht]
\scriptsize
\centering
\begin{tabular}{P{4cm} P{3.2cm} P{5.2cm}}
\toprule
& \multicolumn{2}{c}{Test-based approach} \\
\cmidrule(lr){2-3}
Degree of misspecification 
& Decision rates
& Average causal outcome rates \\
\midrule
No misspecification 
& \makecell[l]{%
\textbf{$X \to Y$: 100\%}\\
$Y \to X$: 0\% \\
Inconclusive: 0\%}
& \makecell[l]{%
Reject both: 1.9\%\\
Fail to reject both: 0\%\\
Reject only $X\!\to\!Y$: 0\%\\
Reject only $Y\!\to\!X$: 98.1\%}
\\
\midrule
Partial misspecification ($f_1$)
& \makecell[l]{%
$X \to Y$: 0\%\\
$Y \to X$: 0\% \\
\textbf{Inconclusive: 100\%}}
& \makecell[l]{%
Reject both: 100\%\\
Fail to reject both: 0\%\\
Reject only $X\!\to\!Y$: 0\%\\
Reject only $Y\!\to\!X$: 0\%}
\\
\midrule
Partial misspecification ($f_2$)
& \makecell[l]{%
$X \to Y$: 0\%\\
$Y \to X$: 0\% \\
\textbf{Inconclusive: 100\%}}
& \makecell[l]{%
Reject both: 100\%\\
Fail to reject both: 0\%\\
Reject only $X\!\to\!Y$: 0\%\\
Reject only $Y\!\to\!X$: 0\%}
\\
\midrule
Full misspecification
& \makecell[l]{%
$X \to Y$: 0\%\\
$Y \to X$: 0\% \\
\textbf{Inconclusive: 100\%}}
& \makecell[l]{%
Reject both: 100\%\\
Fail to reject both: 0\%\\
Reject only $X\!\to\!Y$: 0\%\\
Reject only $Y\!\to\!X$: 0\%}
\\
\bottomrule
\end{tabular}
\caption{
Test-based approach under varying degrees of model misspecification. Rows correspond to settings with no misspecification (both $f_1$ and $f_2$ match their true forms), partial misspecification of $f_1$ (only $f_1$ is misspecified as linear), partial misspecification of $f_2$ (only $f_2$ is misspecified as linear), and full misspecification (both $f_1$ and $f_2$ are misspecified as linear). We report the resulting decision rates and average causal outcome rates across $M$ simulated datasets.
}
\label{tab:pnl_sim_res}
\end{table}

When both functions are correctly specified, the test-based approach selects the true direction $X\to Y$ for all $M$ simulated datasets. Across bootstrap samples, the correct directional outcome occurs at an average rate of $98.1\%$, while both directional models are rejected at an average rate of $1.9\%$.

When $f_1$, $f_2$, or both functions are misspecified as linear, both directional null hypotheses are rejected for all $M$ simulated datasets, resulting in an inconclusive decision. In particular, misspecifying $f_2$ produces results similar to those obtained when $f_1$ is misspecified. This is the appropriate diagnostic outcome in these settings because neither fitted directional model adequately represents the data-generating mechanism. These results provide empirical evidence that the procedure is sensitive to this particular form of outer-transformation misspecification, but they do not establish that every form of $f_2$ misspecification necessarily induces a departure from the additive-model null.

\section{Simulation Results Across Sample Sizes}

Following the methodology of \citet{prakash2024diagnostic}\footnote{See \url{https://github.com/shreyap18/causalDiagnose} for the code.}, we carry out simulations examining DirectLiNGAM and the test-based approach at sample sizes ranging from 20 to 1500 under the linearity and non-Gaussianity settings considered in Section~4.1.1 of the main text. For DirectLiNGAM, we report the rates of selecting the correct direction, $X\to Y$ (orange), and the incorrect direction, $Y\to X$ (blue). For the test-based approach, we additionally report the rates of rejecting both directional null hypotheses (purple) and failing to reject both directional null hypotheses (green).

 \begin{figure}
    \centering
    \includegraphics[width=\linewidth]{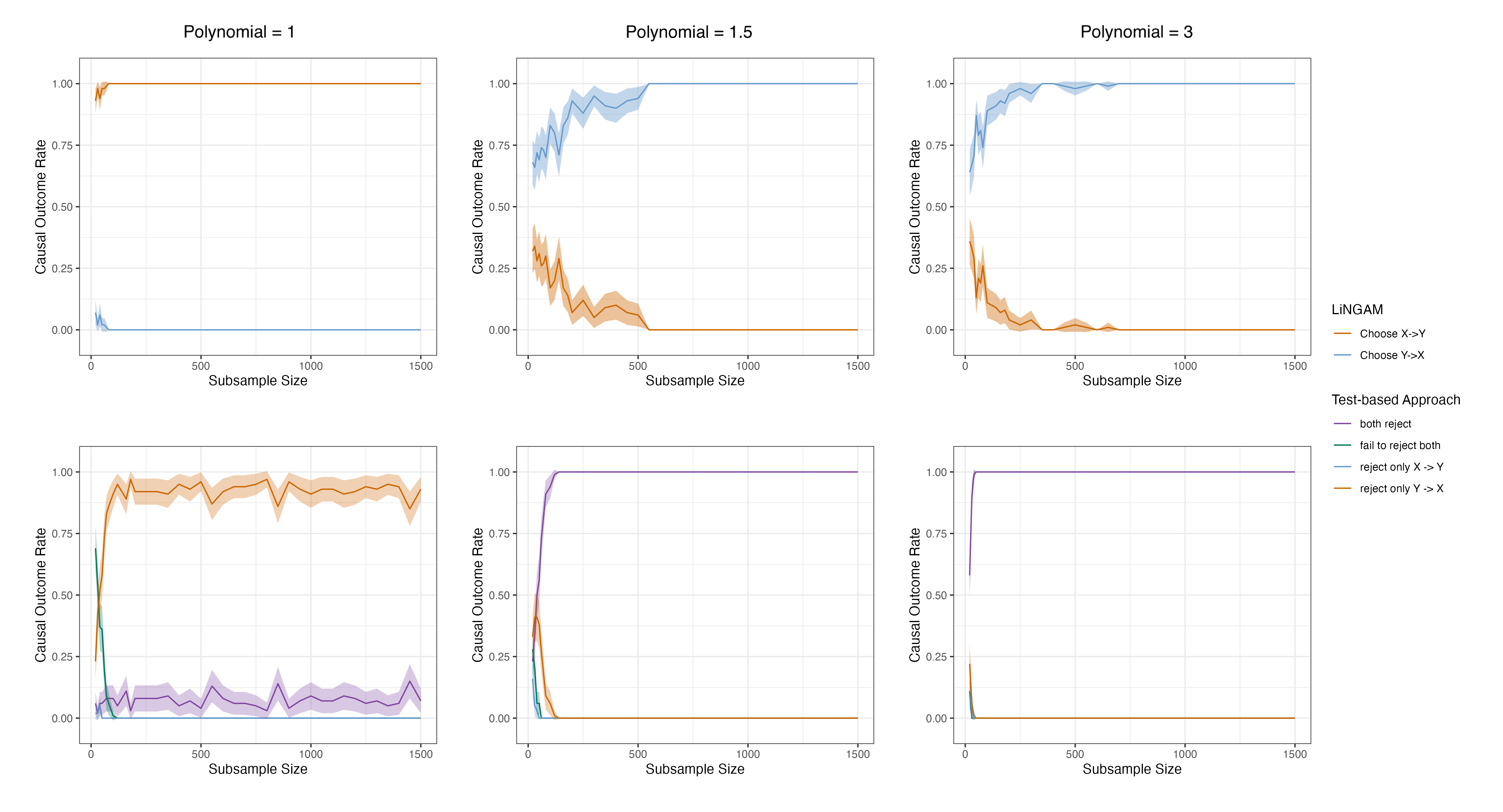}
    \caption{DirectLiNGAM and test-based approach outcome rates across sample sizes under varying degrees of violation of the linearity assumption. In each panel, the \(x\)-axis represents the sample size, up to \(N=1500\), and the \(y\)-axis represents the causal outcome rate. The first row presents the DirectLiNGAM results, and the second row presents the test-based approach results. The columns correspond to no violation (\(Y\) is a linear function of \(X\)), a moderate violation (\(Y\) is a polynomial function of \(X\) with exponent \(1.5\)), and a severe violation (\(Y\) is a polynomial function of \(X\) with exponent \(3\)), respectively. The shaded regions represent pointwise confidence intervals.}
    \label{fig:linearity_sample_size}
\end{figure}

Figure~\ref{fig:linearity_sample_size} presents the results under violations of linearity. When there is no violation, both DirectLiNGAM and the test-based approach support the correct direction with probability approaching one as the sample size increases. Under moderate and severe violations, however, DirectLiNGAM increasingly supports the incorrect direction. In contrast, the test-based approach increasingly rejects both directional null hypotheses, indicating that neither directional model adequately represents the data-generating mechanism.

 \begin{figure}
    \centering
    \includegraphics[width=\linewidth]{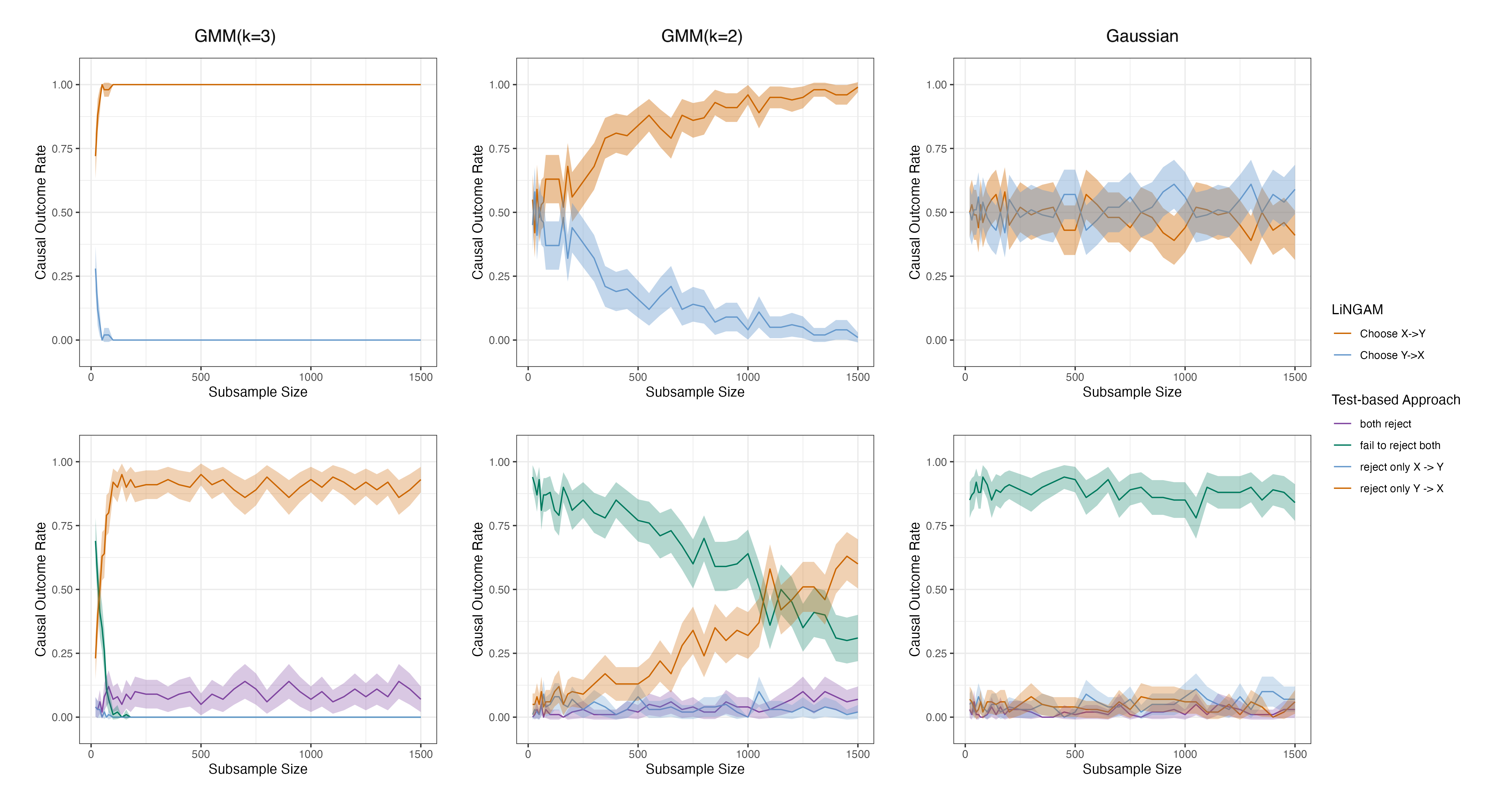}
    
    \caption{DirectLiNGAM and test-based approach outcome rates across sample sizes under varying degrees of violation of the non-Gaussianity assumption. In each panel, the \(x\)-axis represents the sample size, up to \(N=1500\), and the \(y\)-axis represents the causal outcome rate. The first row presents the DirectLiNGAM results, and the second row presents the test-based approach results. The columns correspond, respectively, to no violation (errors from a three-component Gaussian mixture model), a slight violation (errors from a two-component Gaussian mixture model), and a severe violation (a Gaussian predictor and Gaussian errors). The shaded regions represent pointwise confidence intervals.}
    \label{fig:nongaussianity_sample_size}
\end{figure}

Figure~\ref{fig:nongaussianity_sample_size} presents the results under violations of non-Gaussianity. When there is no violation, both methods increasingly support the correct direction as the sample size grows. Under a slight violation, DirectLiNGAM initially supports both directions at approximately equal rates but increasingly favors the correct direction as the sample size increases. For the test-based approach, the rate of failing to reject both null hypotheses decreases with sample size, while the rate of selecting the correct direction exceeds $0.5$ as the sample size increases; the remaining outcome rates stay near zero. Under a severe violation, DirectLiNGAM supports the two directions at approximately equal rates across sample sizes. In contrast, the test-based approach increasingly fails to reject both directional null hypotheses, consistent with non-identifiability under Gaussianity.

\end{document}